\newtheorem{theorem}[]{Theorem}
\newtheorem{definition}[]{Definition}
\newtheorem{lemma}[]{Lemma}
\begin{document}

\title{Quantum computing is scalable on a planar array of qubits with fabrication defects}

\author{Armands Strikis}
\email{armands.strikis@mansfield.ox.ac.uk}
\affiliation{Department of Materials, University of Oxford, Oxford OX1 3PH, United Kingdom}

\author{Simon C. Benjamin}
\affiliation{Department of Materials, University of Oxford, Oxford OX1 3PH, United Kingdom}
\affiliation{Quantum Motion, Pearl House, 5 Market Road, London N7 9PL, United Kingdom}

\author{Benjamin J. Brown}
\affiliation{Centre for Engineered Quantum Systems, School of Physics, University of Sydney, New South Wales 2006, Australia}
\affiliation{Niels Bohr International Academy, Niels Bohr Institute, Blegdamsvej 17, University of Copenhagen, 2100 Copenhagen, Denmark}

\begin{abstract}
To successfully execute large-scale algorithms, a quantum computer will need to perform its elementary operations near perfectly. This is a fundamental challenge since all physical qubits suffer a considerable level of noise. Moreover, real systems are likely to have a finite {\it yield}, i.e. some non-zero proportion of the components in a complex device may be irredeemably broken at the fabrication stage. We present a threshold theorem showing that an arbitrarily large quantum computation can be completed with a vanishing probability of failure using a two-dimensional array of noisy qubits with a finite density of fabrication defects. To complete our proof we introduce a robust protocol to measure high-weight stabilizers to compensate for large regions of inactive qubits. We obtain our result using a surface code architecture. Our approach is therefore readily compatible with ongoing experimental efforts to build a large-scale quantum computer.
\end{abstract}

\maketitle

\section{Introduction}

Full-scale quantum information processing will require a large number of physical qubits, to facilitate error correction protocols that will overcome errors experienced by the machine~\cite{Shor1996}. A approach adopted by multiple research teams across various platforms is to fabricate a two-dimensional device with a large array of interacting qubits realised on its surface~\cite{Huang2020, Zhang2018, Bruzewicz2019}. However, as we aim to produce larger devices with more physical qubits, we are likely to find that some non-zero fraction of the components we rely upon to realise qubits will be highly-imperfect, or will not function at all. We must therefore find error-correction protocols that enable fault-tolerant quantum computing using these flawed devices. Such a method might even provide solutions to two  closely aligned problems: First, there is the challenge that even when all components function nominally at the start of a calculation, there may be  mid-calculation events that lead to long-lasting (if not permanent) defects in the system. Examples include cosmic ray impacts within superconducting or silicon devices, or rare leakage or loss events in ion trap or neutral atom arrays~\cite{Martinis2021, Wilen2021, McEwen2021, Vala2005, Bermudez2017,rydberg2021}. Second, in a given platform it may be problematic to realise an unbroken homogeneous two-dimensional plane of interlinked qubits over very large scales; instead it may be desirable to introduce void spaces through which control or power line bundles, cooling channels or other infrastructure can pass. In effect, one would be deliberately engineering in rare but large `defects' in the sense explored in this paper.

Quantum error correction can proceed by repeatedly performing stabilizer measurements, specified by some code, that identify the errors occurring over time~\cite{Gottesman1997, Kitaev1997, Kitaev03, Dennis2002, wang2003confinement, Fowler2012, Lidar2013}. These measurements produce a syndrome pattern that we use to determine how to prevent errors at the logical level. Provided the physical errors act locally and at a sufficiently low rate, we can make the failure rate of a quantum computation arbitrarily small by increasing the size of the codes we use. However, introducing a fixed architecture with a finite density of fabrication defects would compromise our ability to measure all of the stabilizers of the code. We will then require suitably-adapted strategies to collect enough syndrome information to recover the output of a quantum computation reliably.

In this work we show how to perform quantum computation with an arbitrarily small logical error rate using a two-dimensional array of qubits and a finite density of fabrication defects. We specifically consider error correction with the surface code~\cite{Kitaev03, Fowler2012}, a code that is now under intensive experimental development~\cite{Takita2017, Chen2021, Andersen2020}. 
Effectively, fabrication defects introduce punctures to the qubit array, some of which may be very large, where we assume we have no control over qubits within each puncture. An essential component to obtain our result is the design of a protocol to collect syndrome data reliably near to where fabrication defects lie. Given this component, we show that we can successfully complete a computation running on the remaining intact qubits on the lattice  with high probability, assuming these qubits experience errors at a suitably low rate. Our result is proved assuming a phenomenological noise model capturing the key features of a noisy quantum circuit where qubits experience local errors at a low rate, and where measurements can give incorrect outcomes if errors occur on the circuit elements used in their readout circuits.

Earlier work has shown that single-shot quantum error-correcting codes are intrinsically robust to fabrication defects, or more generally, time-correlated errors~\cite{Bombin2016}. However, known single-shot codes are realised in no fewer than three dimensions. As such, their manufacture will present technological challenges compared with their two-dimensional counterparts. For small two-dimensional systems suffering fabrication defects, numerical results have shown that the logical failure rate can be suppressed by increasing the code distance~\cite{Stace2009,Auger2017,Nagayama2017}. However it remains to be shown whether these solutions scale to give an arbitrarily low logical failure rate in the presence of a finite density of fabrication defects. Indeed, it is commented that some of these results may only demonstrate a pseudo-threshold~\cite{Auger2017}. This observation is consistent with recent work that argues that a two-dimensional code with a high density of static punctures, arranged in a fractal pattern, will not demonstrate a finite threshold error rate~\cite{zhu2021t}. In the present paper we use analytical methods together with a dynamic protocol to reliably learn the values of high-weight stabilizers. We thus build upon earlier proposals~\cite{Stace2009,Auger2017,Nagayama2017} to obtain, for the first time, a threshold theorem for two-dimensional systems suffering a finite rate of fabrication defects.

The paper is structured as follows. In Section~\ref{sec:UCnCD} we describe error correction with fabrication defects and introduce a protocol to measure syndrome data near to fabrication defects. In  Section~\ref{sec:TT} we introduce the technical tools we need to prove a threshold theorem. In Section~\ref{sec:FTQC} we argue that our results are general to arbitrary quantum computations with the surface code model. We discuss the applicability of our protocol to cosmic ray and other time-correlated errors in Section~\ref{sec:CR}. In Section~\ref{sec:Disc} we conclude with a discussion on future work.

\section{Error correction with fabrication defects}
\label{sec:UCnCD}

We begin by giving a brief overview of error correction with the surface code before explaining how we adapt the system to compensate for fabrication defects. We advise the reader looking for more details about the review material on surface-code error correction to see Refs.~\cite{Kitaev03, Dennis2002, wang2003confinement}.
We define the surface code with a qubit on each of the edges of a square lattice. It is a stabilizer code with two types of stabilizer generators; star operators $A_v = \prod_{\partial e \ni v} X_e$ at vertices $v$ and plaquette operators $B_w = \prod_{e \in \partial w} Z_e$ at faces $w$ where $ X_e$ and $Z_e$ denote the standard Pauli matrices acting on qubit $e$, $\partial e$ denote the set of vertices $v$ on the boundary of edge $e$ and $ \partial w$ denote the set of edges $e$ on the boundary of face $w$. We can also define two types of boundary, rough and smooth, that, respectively, have modified weight-three Pauli-Z and Pauli-X stabilizers where the lattice terminates~\cite{bravyi1998, Dennis2002}.
The stabilizers specify a code subspace spanned by state vectors $|\psi\rangle$ satisfying $A_v|\psi\rangle = B_w |\psi\rangle = (+1) |\psi \rangle$. 
The distance of the code $d$ is the weight of the least weight non-trivial, i.e. non-identity, Pauli logical operator where logical operators commute with all of the stabilizer operators, but are not themselves members of the abelian group generated by the stabilizers $A_v,B_w \in \mathcal{S}$. The weight of a Pauli operator is the number of qubits in its non-trivial support.

Fault-tolerant error correction with the surface code \cite{Dennis2002, wang2003confinement, raussendorf05, Fowler2012} is described by a $(2+1)$-dimensional space-time lattice where the time axis runs in a direction orthogonal to the two-dimensional planar qubit array that supports the surface code. Stabilizers generators of the qubit array are measured repeatedly over time to identify errors.
We say that we detect an error event when the outcome of a given stabilizer differs over two consecutive measurements; thus the location of the detection event, or just `event' for short, is defined both in space and time. 
We call the configuration of all the detection events on the space-time lattice the error syndrome.

\begin{figure}
    \centering
    \includegraphics{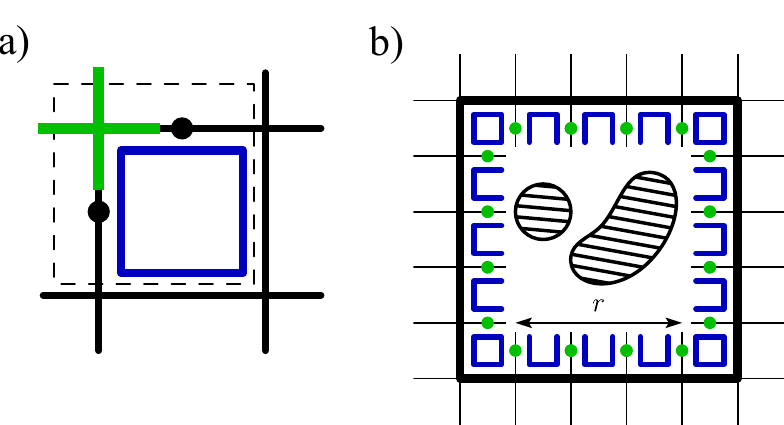}
    \caption{Dealing with fabrication defects. (a)~A basic unit cell of the lattice that consists of two qubits together with a single star operator, shown in green, and a plaquette operator, blue, enclosed in the dashed line. (b)~We measure two high-weight super stabilizers $\mathcal{A}_P$ and $\mathcal{B}_P$ to detect errors near to the puncture $P$ associated to the cluster of fabrication defects. Boundary qubits $e \in \partial P$ are marked by green spots and boundary plaquettes $w \in \partial P$ are shown in blue. The super stabilizer $\mathcal{A}_P$ can be inferred from the product of Pauli-X measurements $e \in \partial P$, and the other, $\mathcal{B}_P$, is the product of Pauli-Z operators on the qubits marked with a thick black line. We can infer the value of $\mathcal{B}_P$ with the product of plaquettes $w\in \partial P$ by the relationship $\mathcal{B}_P = \prod_{w \in \partial P} B_w$. }
    \label{Fig:UnitCell}
\end{figure}

Errors on the surface code can be regarded as strings in the space-time lattice that produce detection events at their endpoints. 
Bit-flip and phase-flip errors that occur on the qubit array over time produce strings that run parallel to the surface code in space time. Measurement errors, i.e. errors that cause a stabilizer measurement to return the incorrect outcome, can be regarded as 
string segments that run parallel with the temporal axis. In general, individual errors compound to make longer strings. We correct errors using the space-time picture by collecting a large history of syndrome data. We look for a correction that matches together pairs of nearby detection events, such that the error, together with its correction, acts trivially with the encoded information. This is done with a classical software called a decoder. The minimum-weight perfect matching (MWPM) decoder~\cite{Dennis2002, wang2003confinement, raussendorf05, Fowler2012} is one such tool that operates by matching pairs of detection events with error strings in a way that minimises the total weight of these strings. We consider this decoder in our proof for a threshold. Importantly, it has the capability of finding a correction that corresponds to the least-weight error that may have caused a given syndrome.

\subsection{Fabrication defects and super stabilizers}

We introduce shells in the space-time lattice to detect error events near to fabrication defects. Before we describe shells, here, we first explain how fabrication defects manifest themselves in the qubit array. We also describe our protocol to measure high-weight super stabilizers that enclose the punctures created by fabrication defects. In the following Subsection we explain how our robust method for measuring super-stabilizers in the qubit array will give rise to shells in the space-time lattice.

We coarse grain the qubit array in terms of unit cells, see Fig.~\ref{Fig:UnitCell}(left). The unit cell consists of four elements; two data qubits, one star operator, and one plaquette operator. This enables us to define our model for fabrication defects: We say that a unit cell supports a fabrication defect unless all of its components are intact. That is, its qubits, and the circuitry and ancilla qubits used to measure stabilizers all function correctly. However if even one of the elements is not intact then, for the purposes of the present paper, we assume that the entire set of four elements cannot be used. We also assume that we can use the circuitry of intact unit cells to measure stabilizers with a reduced support near to a fabrication defect. Specifically, we will make weight-three measurements near to fabrication defects, where these measurements are contained in the support of weight-four stabilizer measurements of an ideal qubit array.

Effectively, fabrication defects give rise to punctures on the lattice where the standard local stabilizer generators of the ideal code cannot be measured.
Nevertheless, it is still important to identify events at the endpoints of string-like errors at the locations of these punctures. We therefore need to use alternative strategies to determine the locations of events on areas of the lattice where fabrication defects exist. Our protocol makes use of super-stabilizers that enclose fabrication defects, as proposed in Refs.~\cite{Stace2009,Auger2017, Nagayama2017}. Super stabilizers are supported on the boundary of fabrication defects, and identify events inside the puncture.

\begin{figure}
    \centering
    \includegraphics{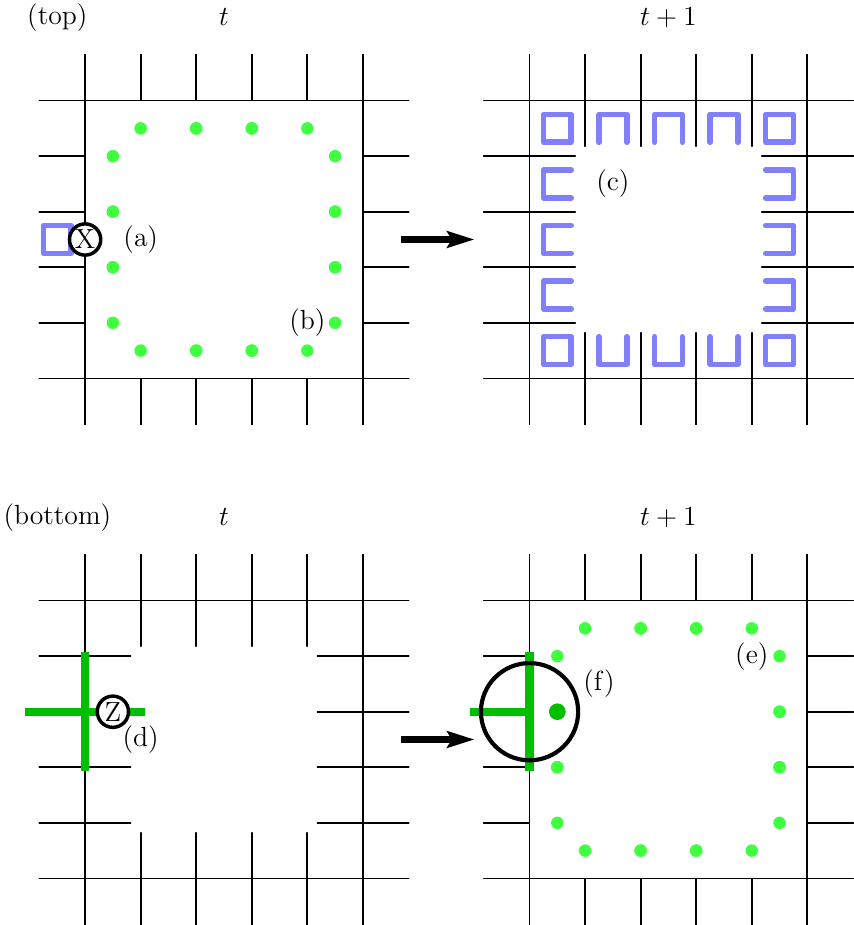}
    \caption{Measuring super stabilizers that enclose fabrication defects. (top)~A code deformation from a smooth boundary to a rough boundary. A Pauli-X error is detected by a plaquette, (a). The code deformation is carried out by initialising the green qubits in the $|+\rangle$ state, (b). The Pauli-X error is detected by the super stabilizer after the code deformation where we measure the boundary plaquette operators,~(c). (bottom)~A code deformation from a rough boundary to a smooth boundary. (d)~A Pauli-Z error at the rough boundary is detected by a weight-four star operator. (e)~The Pauli-Z error is also detected by the super stabilizer. (f)~A Pauli-Z error that occurs just before the deformation is identified by the weight-four check that is inferred from a single-qubit measurement and a weight-three stabilizer measurement of the smooth boundary.}
    \label{Fig:Deformations}
\end{figure}

To correct the surface code reliably, we need to measure the values of two high-weight stabilizer operators that enclose each puncture $P$ on the lattice; one star operator and one plaquette operator. We construct a square boundary of linear size $r$ that completely encloses the fabrication defects, see fig.~\ref{Fig:UnitCell}(right). We denote the qubits and stabilizers associated to the boundary with the set $ \partial P $. We show the boundary qubits of $\partial P$ with green spots and its boundary plaquettes in blue in Fig.~\ref{Fig:UnitCell}(right). We minimise $r$ such that all of the qubits and plaquettes on the boundary $\partial P$ are intact and assume the small cluster of fabrication defects that give rise to $P$ is well separated from all other fabrication defects. The two boundary stabilizers about puncture $P$ are $\mathcal{A}_P$; the product of Pauli-X operators on the qubits marked by green spots and $\mathcal{B}_P$ which is the product of Pauli-Z terms supported on the thick black line in Fig.~\ref{Fig:UnitCell}(right).

We adopt the method used in Refs.~\cite{Stace2009,Auger2017,Nagayama2017} to measure super stabilizers. It is shown that these complicated measurements can be decomposed into several low-weight non-commuting `gauge' measurements to read out these stabilizers in a practical way. We infer the value of $\mathcal{A}_P = \prod_{e \in \partial P} X_e$ by measuring all of the qubits $e \in \partial P$ in the Pauli-X basis. Similarly, we can infer the value of $\mathcal{B}_P$ with low-weight gauge operators. We have that $\mathcal{B}_P = \prod_{w \in \partial P} B_w$ where now $\partial P$ is the set of plaquettes at the boundary of the puncture shown in blue in Fig.~\ref{Fig:UnitCell}(right). We can therefore infer the value of $\mathcal{B}_P$ by measuring the plaquette operators in $\partial P$.

We must regularly measure both of these high-weight stabilizer operators accurately to identify errors that occur over time. However, we cannot learn the values of all of the gauge checks simultaneously since they do not commute. Furthermore, in practice, error-detecting parity measurements may be unreliable and return incorrect outcomes. To deal with these issues we divide the gauge measurements into two commuting subsets associated to each of the two different stabilizer measurements about each puncture. The first of these subsets are Pauli-X measurements $X_e$ for $e\in \partial P$ and the other are plaquette measurements $B_w$ for all $w\in\partial P$. Note all of these measurements commute with the standard stabilizers on the lattice and therefore either subset can be measured simultaneously with the other stabilizers of the code. Likewise, super-stabilizers $\mathcal{A}_P$ and $\mathcal{B}_P$ both commute with all the gauge measurements.

Measuring one subset of gauge measurements followed by the other, and so on, acts like a code deformation of the stabilizer group near to the puncture where we transform a rough boundary into a smooth boundary, and vice versa. Let us look at these deformations in detail, and explain how different types of errors are identified as we deform the code. We find that all types of individual errors create detection events in pairs, thereby allowing us to employ standard minimum-weight perfect matching to decode the error syndrome in space time.

Let us begin by looking at how we detect an error with the super stabilizer $\mathcal{B}_P$, see Fig.~\ref{Fig:Deformations}(top). The code deformation changes the boundary around the puncture from smooth to rough, whereby after the deformation we are in a known eigenstate of all the stabilizers corresponding to the lattice shown at time $t+1$. At time $t$ we measure the stabilizers of the surface code where the puncture has a smooth boundary, and at the same time we initialise all of the qubits $e\in\partial P$, shown by green spots in Fig.~\ref{Fig:Deformations}(b), in the $|+\rangle$ state. This choice for initialisation means that we begin in a known eigenstate of the weight-four star operators at the rough boundary shown at time $t+1$ as well as the operator $\mathcal{A}_P$.

We assume a Pauli-X error occurs at the smooth boundary at time $t$ or earlier, see Fig.~\ref{Fig:Deformations}(a). This is detected by the weight-four plaquette stabilizer, also shown at Fig.~\ref{Fig:Deformations}(a), where the event is detected at the time the error occurs. 
We also obtain a second detection event inside the puncture that we measure during the code deformation. At time $t+1$ all of the boundary plaquettes, $B_w $ for $w\in\partial P$ are measured, thereby projecting the lattice into a known eigenstate of the weight-three plaquette operators, shown in Fig.~\ref{Fig:Deformations}(c), assuming for now that all of the measurements are reliable. This deformation also reveals the value of $\mathcal{B}_P$ using that $\mathcal{B}_P = \prod_{w \in \partial P} B_w$. We therefore identify a second event that can be regarded as lying inside the puncture since $\mathcal{B}_P$ anti-commutes with the Pauli-X error.

We will also detect an event when we measure $\mathcal{B}_P$ if a $B_w $ operator for some $w\in\partial P$ experiences a measurement error. We identify a second event in the space-time lattice for one such measurement error by repeating the stabilizer measurements of the code with the rough boundary puncture we have just produced. We compare the measurement results for the $B_w$ operators at time $t+2$ with the measurement outcomes collected at time $t+1$ to identify a measurement error that occurred at time $t+1$ where we projected the boundary of the puncture onto a rough boundary. Unlike when the code deformation is initially performed, after measuring the $B_w$ operators at time $t+1$ the system is projected into an eigenstate of all $B_w$ operators at the boundary $\partial P$. Upon a second round of measurements then, we can identify events produced by individual measurement errors that occurred for measurements performed at time $t+1$ by comparing the measurements at the two concurrent times.

In Fig.~\ref{Fig:Deformations}(bottom) we deform a rough boundary enclosing the fabrication error, shown at time $t$, onto a smooth boundary. To make this deformation we measure all of the qubits $e \in \partial P$ in the Pauli-X basis at time $t+1$. We consider a Pauli-Z error on a qubit $e\in \partial P$ before the deformation, see Fig.~\ref{Fig:Deformations}(d). This error creates a detection event at the weight-four star operator $A_v$ shown at Fig.~\ref{Fig:Deformations}(d). A second event is also created in the puncture that is detected when we measure $\mathcal{A}_P$. 
$\mathcal{A}_P$ is given by the product of single qubit measurements during the deformation. We note that at time $t+1$ we also begin measuring the weight-three star operators associated to the smooth boundary at the puncture we have just created. We infer the value of this weight-three operator using the value of the single-qubit measurement and the weight-four star operator before the code deformation.

Let us finally consider the situation where an error occurs immediately before the code deformation takes place, where a Pauli-Z error occurs after time $t$ but before time $t+1$ when the deformation is performed. This error is detected by the inferred measurement the weight-four star operator at the rough boundary. However, once the code deformation takes place, we identify the bulk detection event by inferring the value of the weight-four star operator. We infer its value by taking the product of the weight-one measurement and the weight-three operator of the smooth boundary shown in the circle in Fig.~\ref{Fig:Deformations}(f). This measurement will reveal a second event that can be paired to the detection event identified by the super stabilizer $\mathcal{A}_P$.

One should also consider the measurement errors that occur when we perform this deformation. A measurement error on the weight-three gauge measurement will create a detection event at this space-time location of the stabilizer shown in Fig.~\ref{Fig:Deformations}(d). In the case of this measurement error on the weight-three-check, a second event will be identified when the weight-three measurement is repeated, assuming we do not perform any gauge measurements that do not commute with the weight-three operator in the interim.
A measurement error on one of the single-qubit Pauli-X measurements is indistinguishable from the Pauli-Z error on a boundary qubit $e\in \partial P$ that occurs immediately before the deformation is performed.

\begin{figure*}[tbp]
    \centering
    \includegraphics[width =1\linewidth]{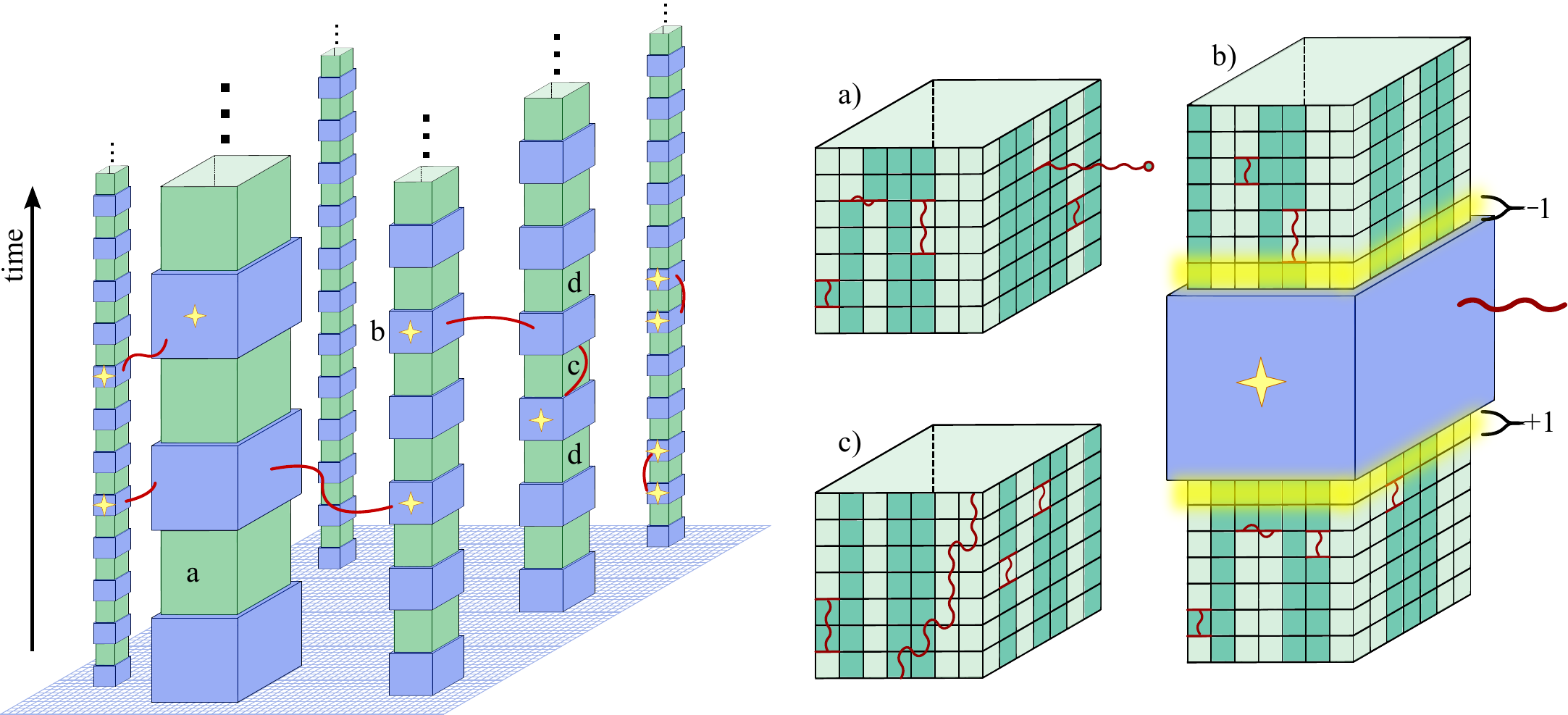}
    \caption{Space-time lattice with super-stabilizers forming shells around the fabrication defect clusters.
    We measure two different super stabilizers $\mathcal{A}_P$ and $\mathcal{B}_P$ in the space-time lattice about each puncture $P$, shown by columns that align along the temporal direction.
    We alternate between measurements of $\mathcal{A}_P$ and $\mathcal{B}_P$ over time for each puncture, where we depict the measurement of either $\mathcal{A}_P$ or $\mathcal{B}_P$ when each column is coloured either green or blue, respectively.
    The choice of super-stabilizer alternates less frequently for larger clusters of fabrication defects, so as to deal with the increased number of errors that can occur while measuring a higher-weight super stabilizer. The red strings show a specific error configuration, with stars indicating the shells that detect an event at the terminal point of an error string. The right panels describe common circumstances; (a) a typical shell with few gauge-operator measurement errors and a bit-flip error string (with light green and dark green squares representing $+1$ and $-1$ gauge-operator outcomes respectively), (b) the value of super-stabilizers before and after a shell determine that a bit-flip error string has terminated at the smooth shell between them and (c) a collection of errors that results in an error string across the shell; these errors are detected by comparing the values of the previous and next shell of the same boundary ($d$ in the left panel).}
    \label{fig:3dVis}
\end{figure*}

\subsection{Shells}

We have described two operations to deform a boundary around a puncture from rough to smooth and vice versa. Furthermore, we have discussed how measurement errors can affect the outcome of the measurement of super stabilizers, and we require a strategy to learn their values reliably. We find that we can demonstrate a threshold by repeating the stabilizer measurements of each boundary type over a time that scales with the size of the puncture before transforming the puncture to the other type. Specifically, given a puncture of size $r$, we measure the boundary stabilizers of a given boundary type over $O(r)$ rounds of stabilizer readout measurements before transforming the boundary into its opposite type. In this Subsection we describe the objects that are obtained in the space-time lattice with our strategy for repeating gauge measurements at the boundary of the punctures created by fabrication defects. Our strategy gives rise to objects in the space-time lattice that we call shells. We define shells in more detail below. We use the properties of shells that we outline here to present a threshold theorem in the following Section.

Other approaches for identifying events near to punctures have been tested numerically in Refs.~\cite{Stace2009,Auger2017,Nagayama2017} where, in contrast to our approach, the boundary type is changed at every round of stabilizer readout independent of the size of the puncture. It remains to be shown if these strategies demonstrate a threshold as the system size diverges~\cite{Auger2017}. However, in the limit where the code distance is small and we only have small fabrication defects, these numerical studies are consistent with our strategy for error-correction on a defected lattice. We also remark on complementary work~\cite{Higgott2021} where the alternative gauges of a subsystem code are fixed with an irregular frequency to concentrate low-weight error-correction measurements on detecting dominant types of errors in a biased noise model.

In the space-time picture, punctures in the qubit array are projected along columns, see Fig.~\ref{fig:3dVis}. We show the columns in alternating colours, where the green blocks indicate periods where we measure the stabilizers of a rough boundary and blue blocks indicate periods where we measure smooth boundary stabilizers. We refer to a single block as a \textit{shell}. The height of each shell is proportional to its width, showing that we measure the stabilizers of a given boundary type for a number of rounds approximately equal to its width, $\sim r$. Our code deformations are such that we can identify a single detection event at any given shell.

Strings of Pauli-X errors can terminate at smooth boundaries that are coloured blue in Fig.~\ref{fig:3dVis}. To calculate the value of a blue shell, we compare the product of the plaquette operator measurements $\partial P $ at the moment the code deformation takes place, Fig.~\ref{Fig:Deformations}(top), to the round of measurements we performed before we deformed the puncture onto a smooth boundary, as in Fig.~\ref{Fig:Deformations}(bottom); see also Fig.~\ref{fig:3dVis}(b). The product of all of these measurements should give even parity if no error strings have terminated at the shell. However, the parity of this measurement will change if any Pauli-X strings terminate at the given blue shell, or if any of the plaquette measurements experience a measurement error. In the case that we find odd parity, we mark a syndrome event at the shell that can be paired with another event.

Likewise, Pauli-Z strings can terminate on rough boundary, which correspond to green shells. The product of the single-qubit Pauli-X measurements used to deform the rough boundary onto a smooth boundary, shown in Fig.~\ref{Fig:Deformations}(bottom), give us the value of each green shell. Specifically, it detects the parity of Pauli-Z strings that have terminated at the shell, together with any measurement errors that occur on the single-qubit Pauli-X measurements, as well as any errors that occur intialising the qubits in the $|+\rangle$ state when we deform the smooth boundary onto the rough boundary, as shown in Fig.~\ref{Fig:Deformations}(middle).

We cannot necessarily measure the value of a shell when the system is initialised. For instance, at the first moment we deform a smooth boundary onto a rough boundary, we cannot compare the value of $\mathcal{B}_P$ to an earlier measurement of the same operator, because we have not initialised the system into an eigenstate of $\mathcal{B}_P$ yet. At these locations, we have to regard the shell as an extension of the boundary in the space-time lattice where the system is initialised. Similarly, if the fabrication errors are too close to one of the physical boundaries of the qubit array, we might not have enough space to fully enclose them in a shell. In this case the collection of fabrication defects are regarded as the extension of the spatial boundary by walling them off with the respective boundary. This strategy for handling fabrication errors near boundaries was employed in Ref.~\cite{Auger2017} where details can be found.

The repetition of stabilizer measurements for a given mode of boundary is a key feature of our protocol; it enables us to reliably identify the locations of all types of errors near fabrication defects. In turn we are able prove a threshold theorem. Importantly, assuming the punctures on the qubit array are well separated, then the repetition means that two shells of the same type are also well separated in the temporal direction of the space-time lattice. In Fig.~\ref{fig:3dVis} we show an error configuration that creates syndrome detection events on shells that terminate an odd parity of error strings.

\section{Threshold Theorem}
\label{sec:TT}

\begin{figure}[tbp]
    \centering
    \includegraphics[width =0.7\linewidth]{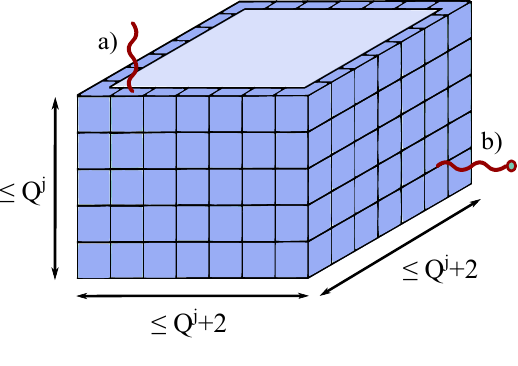}
    \caption{A smooth shell of level $j$. (a) An example temporal error that terminates at the shell due to an erroneous Z gauge measurement. (b) An example bit-flip spatial error string that terminates at the shell.}
    \label{fig:TerminationP}
\end{figure}

In this Section we develop the tools we need to prove a theorem for fault-tolerant quantum computation on a planar qubit array with a finite density of fabrication defects undergoing a phenomenological noise model. Here we show explicitly how our tools are applied to prove  the surface code undergoing an identity operation is robust. In the following section, Section~\ref{sec:FTQC}, we show how the same tools are applied to prove the theorem for a complete gate set. We begin by describing the distribution of fabrication defects and the phenomenological error model before proving our main result.

\subsection{Defect and error models}
\label{subsec:ErrorModel}

We start by describing the key features of the distribution of fabrication defects acting on the qubit array. We draw on the work of Ref.~\cite{Bravyi2011} to characterise the independent and identically distributed configuration of fabrication defects, see Proposition 7 in Appendix B of Ref.~\cite{Bravyi2011} for details. Here we briefly summarise their results. Specifically, it is shown that we can decompose a configuration of defects into connected components of unit cells, that we call \textit{clusters}. Clusters are characterised by different length scales, or \textit{levels}, that are indexed by integers $0\le j \le m $ such that a level $j$ component has a linear length of at most $Q^j$, and is separated from all other clusters of level  $k\ge j$ by distance at least $Q^{j+1} / 3$ where $Q \geq 6 $ is an integer constant that we can choose. We find that our qubit array demonstrates a threshold for quantum computing given that the array is operable in the following sense:
\begin{definition}[Operable qubit array]
We say that the qubit array is operable if its highest cluster level $m$ satisfies $Q^m \leq d / 5 -3$.
\end{definition}

Indeed, given a low enough fabrication defect rate, Ref.~\cite{Bravyi2011} demonstrates the probability of finding a cluster of linear size larger than $ d/4 - 1$ is exponentially suppressed. Specifically, given a qubit array of volume $V \sim d^2$, the probability of the occurrence of a level $m$ cluster with $Q^m > d/4 -1$ is upper bounded by $p_{fail} = \mathrm{exp}(-\Omega(d^\sigma))$ where $\sigma \approx 1/\log Q$ as long as the fabrication defect rate $f$ on any given unit cell is smaller than $f_0 = (3Q)^{-4}$. See Ref.~\cite{Bravyi2011} for proof.

As we have assumed that we can test a qubit array for the locations of fabrication defects offline, we can discard any devices that have clusters of fabrication defects larger than $d/2-1$, but as we have explained, given a sufficiently small fabrication defect rate, the likelihood of discarding any given device is vanishingly small. 

To deal with a level $j$ cluster of fabrication defects, we quarantine all of the qubits within the smallest square that contains the cluster. Hence, this square of inactive qubits has a linear size at most $Q^j$. We then use the intact unit cells at the boundary of the square to measure the super-stabilizers around the perimeter of the cell; these give rise to the shells introduced in Section~\ref{sec:UCnCD}. Due to the additional space required to construct the gauge operators, the super-stabilizers have a linear size at most $Q^j+2$ (c.f. Fig.~\ref{Fig:UnitCell}).

As we have proposed, we alternate between the two types of super-stabilizer with a frequency that is inversely proportional to the size of the fabrication defect cluster. For simplicity, we choose to alternate the type of super-stabilizer every $Q^j$ rounds for a cluster of level $j$. This results in shells of level $j$ with a spatial side length at most $Q^j+2$ (given as the linear size of a super-stabilizer) and the temporal side length of $Q^j$ (See Fig.~\ref{fig:TerminationP}). We define the diagonal width of the shell to be the Manhattan distance between two corners of the shell with the largest separation. Therefore, the diagonal width of a level $j$ shell is at most $2(Q^j +2)+Q^j = 3Q^j + 4$.

In the rest of the subsection, we shall use these results to prove the following bounds. First, we bound the number of error string termination points for individual shells, second, we bound the distance between the same type shells. Both temporal and spatial directions need to be considered in each case. These results allow us to upper bound the number of error configurations and lower bound the weight of error strings that span shells across the space-time lattice.

\begin{lemma}
\label{lemma:termination}
For a smooth or rough shell of level $j$, there are at most $40Q^{2j}$ points where respective error strings may terminate on that shell.
\end{lemma}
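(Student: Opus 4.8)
The plan is to count the termination points of error strings on a shell of level $j$ by considering the two kinds of string endpoints separately: spatial endpoints (bit-flip or phase-flip error strings that terminate on the relevant face of the shell) and temporal endpoints (measurement/gauge errors that terminate along the temporal extent of the shell), using the dimensions of a level $j$ shell established just above the statement, namely spatial side length at most $Q^j+2$ and temporal side length $Q^j$. Concretely, I would argue: a spatial error string terminating at the shell does so at one of the boundary gauge-check locations wrapping the puncture; a generous count is that the number of such check locations on a single temporal slice is at most the perimeter of the super-stabilizer, which is $O(Q^j)$, and there is one such slice per round, of which there are the temporal side length $Q^j$ worth — but since only strings entering through the spatial faces (top and bottom of the shell in the temporal picture, plus the ``side'' where they enter spatially) contribute, the count is dominated by a product of two side lengths of the shell. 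Symmetrically, temporal errors terminate at locations indexed by (boundary-check location) $\times$ (round number), again a product of two side lengths. Adding these contributions, each bounded by something like $c\,(Q^j+2)\cdot Q^j$ with small integer $c$, and crudely overestimating $(Q^j+2) \le 2Q^j$ when $Q \ge 6$ (or more carefully expanding $(Q^j+2)Q^j = Q^{2j} + 2Q^j \le 3Q^{2j}$), I would collect the constants to reach the stated bound $40Q^{2j}$.

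The key steps, in order, would be: (i) recall from the preceding paragraph that a level $j$ shell is a box of spatial extent $\le Q^j+2$ and temporal extent $Q^j$, and that by construction each code deformation identifies exactly one bulk detection event per shell so that ``termination points'' are well-defined discrete locations; (ii) enumerate the spatial-string termination locations as (perimeter of the boundary super-stabilizer) $\times$ (number of rounds in the shell), noting the perimeter is at most $4(Q^j+2)$ and the rounds number $Q^j$, and similarly for strings that enter through a temporal face; (iii) enumerate the temporal-string (gauge-measurement-error) termination locations by the same bookkeeping — one per boundary gauge check per round; (iv) sum the two contributions and bound $(Q^j+2)$ by a multiple of $Q^j$ using $Q\ge 6$, $j\ge 0$, then absorb everything into the constant $40$. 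I would present the arithmetic loosely, since the constant $40$ is clearly not tight and only its finiteness and the $Q^{2j}$ scaling matter for the downstream threshold argument.

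The main obstacle — really the only delicate point — is getting the \emph{geometric model of a shell precise enough} to justify that every termination point is captured by the ``boundary-check $\times$ rounds'' product, rather than, say, a genuinely two-dimensional set of locations in space that would give $Q^{2j}$ from a single time slice. One must be careful that: (a) spatial error strings can only end at the shell along its two temporal faces (the moments the boundary is deformed) together with the perimeter where a string threads in from the bulk over the shell's lifetime, and (b) the single-qubit and weight-three gauge measurements along the perimeter, repeated over $\sim Q^j$ rounds, are what generate temporal endpoints — so the count is perimeter times rounds, i.e. $\Theta(Q^j)\times\Theta(Q^j)=\Theta(Q^{2j})$, never worse. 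Once the claim that all relevant locations live on these faces is pinned down, the rest is the crude counting above; I would therefore spend most of the written proof on carefully itemizing which faces of the shell host spatial endpoints and which host temporal endpoints, and only a line or two on the final arithmetic that yields $40Q^{2j}$.
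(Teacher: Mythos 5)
Your proposal follows the same approach as the paper: treat the shell as a box of spatial extent $\le Q^j+2$ and temporal extent $Q^j$, count termination points face by face, and absorb the leftover lower-order terms into the constant. The one substantive difference is in the temporal-error count: the paper only attributes temporal terminations to the top and bottom faces of the shell --- the ``initial and final'' gauge measurements at the code deformations --- giving $2\times 4(Q^j+1) = 8Q^j+8$ points, not a full perimeter$\times$rounds $\sim 4Q^{2j}$ grid as you suggest; measurement errors on boundary $B_w$ checks in the interior rounds of the shell produce ordinary paired detection events and do not terminate at the shell. This tighter accounting is what lets the total, $4(Q^j+2)(Q^j+1) + 8(Q^j+1) = 4Q^{2j} + 20Q^j + 16$, close cleanly at $40Q^{2j}$ using only $j\ge 0$ and $Q\ge 1$; your over-count is harmless for the $\Theta(Q^{2j})$ scaling but would push the constant above $40$ (and note $Q^j+2\le 2Q^j$ fails at $j=0$, so your ``crude'' estimate should be replaced by $Q^j+2\le 3Q^j$). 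So: same method, but be careful to localise the temporal terminations to the two temporal faces if you want to recover the stated constant.
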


\begin{proof}
It is enough to examine the largest size smooth shell of level $j$ as any rough shell of the same level is strictly smaller by construction and, hence, has less termination points. The spatial and temporal length of a smooth shell of level $j$ is bounded by $Q^j + 2$ and $Q^j$ respectively, therefore the four sides of the shell has at most $4(Q^j +2)(Q^j +1) = 4Q^{2j} + 12Q^j + 8$ points where bit-flip errors may terminate without signature. Finally, we account for temporal errors due to the erroneous initial and final measurements of the other gauge. There are at most $4(Q^j + 1)$ such termination points at both - the top and bottom of the shell. See Fig.~\ref{fig:TerminationP} for a visual guide. Adding all of the contributions together we upper bound the number of termination points of any shell of level $j$ as $4Q^{2j} + 20Q^j + 16$. For simplicity of later calculations we further upper bound this number by $40Q^{2j}$, where we have used that $j\geq 0$ and $Q$ is a strictly positive integer. 
\end{proof}

In addition, our choice of gauge alternation frequency allows us to prove the following Lemma about the shell separation on a space-time lattice:
\begin{lemma}
\label{lemma:separation}
Let $Q\geq 9$, then any shell $S_j$ of level $j$ is separated from any other shell of level $k$ of the same type, $S_k$, as 
\begin{equation}
\label{eq:distS}
    D(S_j, S_k) \geq Q^{\mathrm{min}(k,j)},
\end{equation}
where distance $D$ is defined on the space-time lattice of unit cells and corresponds to the infinity norm.
\end{lemma}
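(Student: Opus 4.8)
The plan is to reduce the space-time separation bound to two cases — two shells belonging to the same cluster, and two shells belonging to distinct clusters — and to bound each using the cluster geometry from Ref.~\cite{Bravyi2011} together with the explicit shell dimensions established just before the Lemma. Recall a level-$j$ shell has spatial side length at most $Q^j+2$, temporal side length exactly $Q^j$, and, crucially, sits in the temporal column above a level-$j$ cluster, with the type of super-stabilizer alternating every $Q^j$ rounds.

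\emph{Same-cluster case.} Two same-type shells $S_j,S_k$ stacked in the same column necessarily belong to a single cluster, so $j=k$. Between two consecutive shells of one type (say smooth) sits a shell of the opposite type of the same level, of temporal extent $Q^j$; hence the temporal separation of $S_j$ from the next same-type shell is at least $Q^j = Q^{\min(j,k)}$, and since they share the same spatial footprint the infinity-norm distance is governed by this temporal gap. This case is essentially immediate from the alternation frequency.

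\emph{Distinct-cluster case.} Here I would invoke the separation property of the Bravyi–Haah–type cluster decomposition: a level-$j$ cluster is separated from every cluster of level $k\ge j$ by spatial distance at least $Q^{j+1}/3$ in the qubit array. Without loss of generality take $j=\min(j,k)\le k$. The two shells live in columns above their respective clusters, so their spatial separation is at least the cluster separation minus the two ``overhangs'' coming from the fact that a shell is slightly larger than its cluster's bounding square — each shell protrudes by at most a constant ($+2$ per side, from the gauge-operator construction). Thus the spatial infinity-norm distance is at least $Q^{j+1}/3 - (\text{const})$, and the point of the hypothesis $Q\ge 9$ is precisely to absorb these additive constants: one checks $Q^{j+1}/3 - c \ge Q^{j}$ for $Q\ge 9$ and $j\ge 0$, since $Q^{j+1}/3 \ge 3Q^j$ leaves slack $2Q^j \ge 2$ to cover $c$. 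Since the space-time distance is at least the spatial distance, this yields $D(S_j,S_k)\ge Q^{\min(j,k)}$.

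\textbf{The main obstacle} is bookkeeping the geometric overhangs carefully: one must confirm that the shell's bounding box extends beyond the cluster's quarantine square by only a bounded amount (the $+2$ from needing intact unit cells to host the gauge operators), and that when $j<k$ the relevant cluster-separation guarantee is the one indexed by the \emph{smaller} level $j$ — i.e. that a level-$j$ cluster is far from all clusters of level $\ge j$, which covers the level-$k$ cluster. A secondary subtlety is that the temporal positions of shells in different columns are not synchronized, so in the distinct-cluster case one cannot rely on temporal separation at all and must lean entirely on the spatial bound; I would state this explicitly. Finally I would note that the constant $9$ (rather than the $Q\ge6$ used elsewhere) is exactly what makes the additive slack nonnegative, so the Lemma is stated with this stronger hypothesis.
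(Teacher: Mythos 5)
Your proposal is correct and takes essentially the same approach as the paper: the paper likewise splits into a spatial bound for shells over distinct clusters (invoking the $Q^{j+1}/3$ cluster-separation property of the Bravyi--Haah decomposition, subtracting the shell overhang of $2$, and using $Q\geq 9$ to reach $3Q^{\min(j,k)}-2\geq Q^{\min(j,k)}$) and a temporal bound of $Q^j$ for same-cluster shells from the alternation frequency, then combines them via the infinity norm exactly as you do. Your minor phrasing slip (``$+2$ per side'' rather than a total spatial excess of $2$) does not affect the argument.
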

\begin{proof}
To prove this Lemma we bound the spatial and temporal separations of the shells independently. First, we bound the spatial separation of shells that belong to different clusters. Then, we bound the temporal separation of shells that belong to the same cluster. Combining these results we arrive at our result. 

We start by considering the spatial separation $D_s$ of shells belonging to different clusters. Recall that a cluster of level $j$ with linear size at most $Q^j$ is separated from any other cluster of level $k \geq j$ by spatial distance at least $Q^{j+1} / 3$. Since the shells are centered around clusters and have a spatial size at most $Q^j+2$, any shell $S_j$ is spatially separated from any other shell $S_k$ with $k\geq j$ and belonging to a different cluster by $D_s(S_j, S_k) \geq Q^{j+1} / 3 - 2$. Using that $Q \geq 9$ and $j\geq 0$ we find $D_s(S_j, S_k) \geq 9Q^j / 3 - 2 = 3Q^j - 2 \geq Q^j$. The same calculation applies for the case when $j \geq k$ but with $j$ interchanged with $k$. Therefore, $D_s(S_j, S_k) \geq Q^{\mathrm{min}(k, j)}$ irrespective of the relative sizes of $k$ and $j$.

Now, consider the temporal separation $D_t$ of shells belonging to the same cluster. Recall, that we alternate between smooth and rough type shells every $Q^j$ time steps around a cluster of level $j$. Therefore, any two same type shells $S_j$ and $S'_j$ around the same cluster (hence the shell level $j$ for both) have a temporal separation $D_t(S_j, S'_j) \geq Q^j$.

Given that we use the infinity norm, the total separation $D(S_j, S_k)$ is the maximum between spatial and temporal separations. For the shells belonging to different clusters we lower bound $D(S_j, S_k) \geq D_s(S_j, S_k)$ while for the shells belonging to the same cluster we use $D(S_j, S_{k=j}) = D_t(S_j, S'_{j})$.
By combining these results we have proven the Lemma.

\end{proof}

While in general this is a loose lower bound, it allows us to consider separations across space and time in a homogeneous way. 

With this characterisation of the defect error model, it remains to determine if the functional qubits of the qubit array can successfully deal with generic random errors, namely bit-flip, phase-flip, and stabilizer readout errors. We assume that the intact cells are subject to a standard phenomenological noise model where their qubits suffer an independent and identically bit flip or phase flip error with probability $\epsilon$, and a stabilizer or a gauge operator returns an incorrect measurement outcome with probability $q = \epsilon$. We concentrate on a single type of error, namely, bit-flip errors and measurement errors on Pauli-Z type measurements, as we can deal with Pauli-X and Pauli-Z errors separately with the surface code~\cite{Dennis2002}. 
Likewise, we only consider shells with smooth boundaries, as these shells detect error strings of Pauli-X operators.
An equivalent proof will hold for the conjugate type of error, with only small changes to the microscopic details of the argument.
We expect that the generalisation of our results to a circuit noise model is straightforward with only small changes to the constant factors in our proof, see e.g.~\cite{Fowler2012,Tomita2014}.

\subsection{A threshold theorem for phenomenological noise}
\label{subsec:QT}

In this Subsection we prove that the construction of shells on a planar surface code architecture gives rise to a scalable quantum memory under the phenomenological noise model as long as the qubit array is operable.

\begin{theorem}
Suppose a space-time lattice that is generated from an operable qubit array of linear size $d$ and code deformations as presented above.
Then, there exists a phenomenological error rate threshold $\epsilon_0$ such that for independent and random errors with rate $\epsilon< \epsilon_0$ the probability of the logical failure is at most
\begin{equation}
\overline{P} = \exp(-\Omega(d^\eta))
\end{equation}
for some constant $\eta > 0$.
\end{theorem}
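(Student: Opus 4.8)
The plan is to run a Peierls-type counting argument of the kind used to establish the surface-code threshold~\cite{Dennis2002, Fowler2012}, modified so that the extra freedom an error string gains by crossing shells is controlled using Lemmas~\ref{lemma:termination} and~\ref{lemma:separation} together with operability. First I would fix the decoding complex: the $(2{+}1)$-dimensional lattice of unit cells built from the code deformations of Section~\ref{sec:UCnCD}, in which each level-$j$ shell is a region that a Pauli-$X$ string may enter and leave at no cost, creating no net detection event whenever an even number of string ends meet it; equivalently, each shell is one vertex joined by zero-weight edges to its at most $40Q^{2j}$ termination points. Since the MWPM decoder is run on this complex, the correction $C$ it returns on an error $E$ satisfies $|C|\le|E|$, so the relative $1$-cycle $E+C$ has physical weight $|E+C|\le 2|E|$. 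A logical failure occurs only when $E+C$ is homologically nontrivial, so it is enough to (i) lower bound the least physical weight $L_{\min}$ of a homologically nontrivial cycle of the complex, and (ii) upper bound, for each physical weight $\ell$, the number of error configurations that can produce such a cycle.

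For step (i), a nontrivial cycle must cover $\Omega(d)$ units of physical distance: it wraps the code, or joins opposite spatial boundaries, or the initial and final time slices. Shells shorten this, but only by a bounded amount: a level-$j$ shell skips at most its diagonal width $3Q^j+4$, operability caps the level by $Q^m\le d/5-3$, and from the cluster separation of Ref.~\cite{Bravyi2011} two smooth shells of level $\ge j$ lie $\Omega(Q^j)$ apart in the space-time lattice (Lemma~\ref{lemma:separation}). A cycle of physical length $\ell$ is a closed curve of length $\ell$, since shell chords are free, so a packing estimate limits the number of level-$j$ shells it can visit to $O(\ell^2/Q^{2j})$; summing the resulting geometric series $\sum_{j\ge0}Q^j\cdot O(\ell^2/Q^{2j})$ bounds the total distance it can skip by $O(\ell^2)$, with a $Q$-dependent constant made small by taking $Q$ large. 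Hence $\ell+O(\ell^2)\ge d$, so $L_{\min}=\Omega(\sqrt d)$, and a logical failure forces $|E|\ge\tfrac12 L_{\min}=\Omega(d^{\eta})$ with $\eta=1/2$, which already suffices.

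For step (ii), I would regard a ``bad'' configuration of physical weight $\ell$ as a generalised connected cluster grown from a root space-time cell: there are at most $V\sim d^2T$ roots, a constant number of ways to extend the cluster by one qubit, and --- by Lemma~\ref{lemma:termination} --- at most $40Q^{2j}$ ways to ``jump'' across a level-$j$ shell. Using Lemma~\ref{lemma:separation} again to bound the number of level-$j$ jumps a weight-$\ell$ cluster can make by $O(\ell/Q^j)$, the number of bad configurations of weight $\ell$ is at most $V\,c_0^{\ell}\prod_{j\ge0}(40Q^{2j})^{O(\ell/Q^j)}=V\,C^{\ell}$, where $\log C$ is a finite $Q$-dependent constant because $\sum_{j\ge0}Q^{-j}\log(40Q^{2j})<\infty$. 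A union bound then gives $\overline P\le V\sum_{\ell\ge\Omega(d^{\eta})}C^{\ell}\epsilon^{\ell/2}$, which for $\epsilon<\epsilon_0:=C^{-2}$ is $\mathrm{poly}(d)\,\exp(-\Omega(d^{\eta}))$, as claimed. A short additional argument handles shells that abut a physical boundary or the first/last time slice by treating them as extensions of that boundary, as in Ref.~\cite{Auger2017}, which affects only constants.

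I expect step (ii) to be the main obstacle: one must give a notion of ``jump'' for which each level-$j$ jump genuinely costs $\Omega(Q^j)$ units of physical error weight, so that the product over levels converges to a $d$-independent base $C$ rather than a power of $d$. This is exactly what the two Lemmas are engineered to supply --- the $40Q^{2j}$ bound caps the routing entropy per shell use, the separation imposes a linear ``toll'' per use, operability bounds the number of levels, and the integer $Q\ge9$ can be enlarged to tame all constants simultaneously. A secondary point, already prepared by the discussion around Figs.~\ref{Fig:Deformations} and~\ref{fig:3dVis}, is to check that the gauge-fixing protocol really realises this decoding complex: every elementary error --- a bit flip, an erroneous $Z$ gauge or plaquette measurement, or an initialisation error during a deformation --- must create exactly a matchable pair of endpoints, which is what makes $|C|\le|E|$ legitimate.
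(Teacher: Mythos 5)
There is a genuine gap, and it sits exactly where you flag ``the main obstacle.'' You assert that Lemmas~\ref{lemma:termination} and~\ref{lemma:separation} already supply the bound $n_j = O(\ell/Q^j)$ on the number of level-$j$ shell encounters, but the separation $Q^j$ in Lemma~\ref{lemma:separation} is measured in the lattice metric of the space-time volume, whereas $\ell$ counts \emph{physical errors}. Because a chord across a shell is free, a walk can span the $Q^j$ lattice separation between two level-$j$ shells by chaining free chords of nested smaller shells, consuming far fewer than $Q^j$ physical errors. A level-$j$ encounter therefore costs only $\Omega\bigl((Q/c)^j\bigr)$ physical weight for some constant $c>1$, not the $\Omega(Q^j)$ you claim, so the honest count is $n_j \lesssim c^j\ell/Q^j$, and convergence of $\prod_j(40Q^{2j})^{n_j}$ to a $d$-independent base forces $Q>c$. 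Closing this gap is precisely the content of the paper's Lemma~\ref{lemma:enc}, which builds ``augmented walks'' level by level, proves $n_j\le 2\cdot 15^j\ell/Q^j$ and $\ell'_k\le 15^{k+1}\ell$ by induction, and therefore needs $Q>15$ --- strictly stronger than the $Q\ge 9$ required by the two Lemmas you cite. Without that inductive compounding, your proposal does not actually produce a geometric $N(\ell)\le VC^\ell$.

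Your step~(i) compounds the problem rather than patching it: the ``packing estimate'' $O(\ell^2/Q^{2j})$ is circular, since a packing argument needs an a priori bound on the spatial extent of the cycle, which is exactly $\ell$ plus the total chord length you are trying to bound. It is also inconsistent with the linear count you use in step~(ii); if the quadratic count were the operative one, the product over levels would give $\exp\bigl(O(\ell^2)\bigr)$, which no fixed $\epsilon<1$ can tame in the union bound. The paper avoids this entirely by deducing $L\ge 15^{-(m+1)}d$ directly from the augmented-walk inequality $\ell'_m\le 15^{m+1}\ell$ together with $\ell'_m\ge d$, giving $\eta=1-\log_Q 15$ rather than $1/2$. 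The remaining scaffolding of your outline --- using MWPM so that $|C|\le|E|$, the Peierls union bound $\overline{P}\le\sum_\ell N(\ell)\epsilon^{\ell/2}$, the $40Q^{2j}$ routing entropy per shell from Lemma~\ref{lemma:termination}, and treating boundary-touching shells as extensions of the spatial or temporal boundary --- matches the paper's strategy; it is the missing inductive control on nested chord-chaining that must be supplied.
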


The logical failure rate is expressed as follows
\begin{equation}
    \overline{P} = \sum_{E \in\mathcal{F}} p(E), 
\end{equation}
where $\mathcal{F}$ denotes the set of errors that lead to a logical failure, and $p(E)$ denotes the probability that the error $E$ is drawn from the error model. We prove the theorem by showing that $\overline{P}$ rapidly vanishes as the code distance diverges for a suitably low but constant error rate.

To upper bound the logical failure probability we must characterise the set of errors $\mathcal{F}$. For our proof we adopt the tools from the threshold theorem of Dennis {\it et al.}~\cite{Dennis2002} and therefore we will presume the availability of a decoder that can find a least-weight correction. 
This correction can be obtained efficiently using the minimum-weight perfect-matching algorithm~\cite{Stace2009}.

The decoder fails when its correction error chain $E_{min}$, together with $E$ traverse some non-trivial path $\bm{\ell}$ such that  $E+E_{min}$ is a logical operator, and thereby has some minimal length $\ell \ge L$. It is argued in~\cite{Dennis2002} that the weight of the error $|E|$ must be at least $\ell / 2$, where we use $\ell$ to denote the weight of a path $\bm{\ell}$. Otherwise the decoder will either successfully correct the error, as the error itself will be a lower-weight correction, or create a new path for which the same condition applies. With this characterisation of error configurations that lead to a logical failure, we can regard errors $E \in \mathcal{F}$ in terms of error configurations lying along non-trivial paths $\bm{\ell}$. This enables us to upper bound $\overline{P}$ as follows
\begin{equation}
     \overline{P} \le \sum_{\ell \geq L} N(\ell) \Pi_{\ell}, \label{Eqn:LogicalFailureRate}
\end{equation}
where we sum over all non-trivial paths of length at least $L$ and $N(\ell)$ denotes the number of error configurations that lead the decoder to give rise to a non-trivial path of length $\ell$.
The associated weighting $ \Pi_{\ell}$ is bounded by the worst case probability that an error configuration of weight at least $\ell / 2$ occurs
\begin{equation}
\label{eq:weight}
   \Pi_{\ell}\leq \epsilon ^{\ell / 2}.
\end{equation}

In the original work of Dennis {\it et al.}~\cite{Dennis2002}, the authors upper bound $N(\ell)$ by recognising that the number of non-trivial paths $\bm{\ell}$ is upper bounded by the number of random walks of the same length, $\sim V \nu^\ell$, where $\nu$ is the valency of the lattice and $V$ is the number of starting points for the walk. The number $V$ is upper bounded by the volume of the space-time lattice which is polynomial in the code distance $d$. 
They obtain $N(\ell) \lesssim 2^\ell V \nu^{\ell}$  where $2^\ell$ upper bounds the number of ways to distribute $\ell/2$ or more errors along a path of length $\ell$. In the ideal case the minimum length of any non-trivial path is determined by the code distance $\ell \ge L = d$.

We generalise this expression to find an upper bound on the logical failure rate for surface-code error correction realised using a qubit array with fabrication defects. In this case the number of random walks changes when we consider a space-time lattice with shells. 
Additionally, shells compromise the distance of the code, 
hence, we must also obtain a lower bound on the length $L$ of non-trivial paths over the lattice.
In the remainder of this section we prove our theorem by upper bounding $N(\ell)$ and lower bounding $L$ for the case of a space-time lattice with shells. We note that we consider paths travelling in the spatial and temporal directions because, in the generality of fault-tolerant logical operations, logical errors may occur due to paths traversing the temporal direction. We note that formulation of our technical tools are agnostic of these details such that the space and time-like directions of the space-time lattice are isotropic.

Let us first formalise the definitions and introduce the rest of the terminology required. We characterise errors in $\mathcal{F}$ in terms of walks of a given length that contain some requisite number of errors in their support. We denote the length of a walk by $\ell$ where the walk can make progress with steps along the spatial or temporal directions of the space-time lattice. A spatial step of the walk is to move from one cell to any other cell that shares an edge with it (a qubit error) at some time $t$, while a temporal step of the walk is to stay on the same cell from time $t$ until time $t+1$ (a measurement error, either a stabilizer or a gauge operator). We say that a walk is \textit{non-trivial} if it starts at one boundary of the space-time lattice and ends at the respective other boundary. These boundaries should not be confused with the boundaries of shells.

A walk \textit{encounters} a shell if the error string  along  the  walk  terminates  at the  shell. For example, a walk of bit-flip errors may encounter a shell with smooth boundaries. After encountering a shell, the walk can continue from any other point of the same shell.
As in Ref.~\cite{Dennis2002} we only explicitly consider \textit{self-avoiding} walks for which every site, including shells, is visited at most once. This is justified because walks that cross any given site, or shell, more than once has an equivalent action on the code space as an error configuration that contains a self-avoiding walk. Moreover, walks that are not self avoiding are implicitly accounted for by our weighting function $\Pi_{\ell}$.

We are now ready to prove the following lemma, which is helpful to obtain bounds on $N(\ell)$ and $L$.

\begin{lemma}
\label{lemma:enc}
Let $Q \geq 9$ and assume an operable qubit array. Then on the space-time lattice any self-avoiding non-trivial walk $\bm{\ell}$ may encounter at most 
\begin{equation} 
n_j \le  2\times \frac{15^{j}\ell}{Q^j} \end{equation} 
shells with smooth boundaries of level $j$. Moreover, as a consequence, the length of an augmented walk, i.e, a walk that contains shells of all sizes up to $k$, stretches over a length
\begin{equation}
\ell'_k \le 15^{k+1}\ell. \label{Eqn:Augmented}
\end{equation}

\end{lemma}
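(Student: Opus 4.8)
The plan is to prove both bounds together by induction on the level, establishing simultaneously that $\ell'_{k}\le 15^{k+1}\ell$ and that every non-trivial self-avoiding walk encounters at most $n_{k}\le 2\times 15^{k}\ell/Q^{k}$ smooth shells of level $k$. For $i\ge 0$, let the \emph{augmented walk} $\bm{\ell}'_{i}$ be obtained from $\bm{\ell}$ by replacing every teleportation through a shell of level at most $i$ with a genuine path along that shell joining the same two endpoints, of length at most its diagonal width (hence at most $3Q^{i}+4$); set $\bm{\ell}'_{-1}=\bm{\ell}$ so that $\ell'_{-1}=\ell$, and note $\ell'_{i}\ge\ell$ for all $i$. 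The base case $k=0$ is immediate: by Lemma~\ref{lemma:separation} any two distinct smooth shells lie at distance at least $Q^{0}=1$ in the space-time lattice, so the walk takes at least one step between consecutive smooth-shell encounters, hence meets at most $\ell+1$ smooth shells in total and in particular $n_{0}\le\ell+1\le 2\ell$ (a non-trivial walk has $\ell\ge 1$); replacing the level-$0$ teleportations by detours of length at most $3Q^{0}+4=7$ then gives $\ell'_{0}\le\ell+7n_{0}\le 15\ell$.

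For the inductive step, assume the statement for all levels below $j$, in particular $\ell'_{j-1}\le 15^{j}\ell$. The heart of the argument is the estimate that in $\bm{\ell}'_{j-1}$ the walk takes at least $Q^{j}$ steps between any two consecutive encounters with smooth shells of level $j$. Indeed, in $\bm{\ell}'_{j-1}$ every teleportation through a shell of level at most $j-1$ has been turned into a path, and since we are tracking bit-flip error strings, which can terminate only at smooth shells, the only teleportations that survive are through smooth shells of level strictly above $j$; moreover ``consecutive'' forbids any further level-$j$ smooth shell from lying strictly between the two in question, while self-avoidance forbids re-entering them. Hence the stretch between two such consecutive shells $S$ and $S'$ is an honest path which either reaches $\partial S'$ directly from $\partial S$, of length at least $D(S,S')\ge Q^{j}$ by Lemma~\ref{lemma:separation}, or first reaches some smooth shell $B$ of level $k>j$, already of length at least $D(S,B)\ge Q^{\min(j,k)}=Q^{j}$ by the same lemma. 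Summing over the $n_{j}-1$ internal stretches yields $\ell'_{j-1}\ge(n_{j}-1)Q^{j}$, hence $n_{j}\le\ell'_{j-1}/Q^{j}+1\le 2\times 15^{j}\ell/Q^{j}$, with the additive $+1$ (and the minor bookkeeping for level-$j$ shells that abut a boundary of the space-time lattice, where only one side of a stretch survives) absorbed into the generous constant $2$. Equivalently, one can run this through pairwise-disjoint collars of width of order $Q^{j}$ around the level-$j$ smooth shells, which Lemma~\ref{lemma:separation} keeps clear both of one another and of every larger smooth shell, so that the augmented walk must spend of order $Q^{j}$ steps inside each collar it enters.

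It remains only to close the induction by bounding $\ell'_{j}$. Each of the $n_{j}$ level-$j$ teleportations is replaced by a path of length at most $3Q^{j}+4\le 7Q^{j}$ (valid since $Q\ge 9$), so, using $\ell'_{j-1}\le 15^{j}\ell$ and $n_{j}\le 2\times 15^{j}\ell/Q^{j}$, we obtain $\ell'_{j}\le\ell'_{j-1}+n_{j}(3Q^{j}+4)\le 15^{j}\ell+14\cdot 15^{j}\ell=15^{j+1}\ell$, completing the induction and hence the proof. The principal obstacle to be overcome is that the walk crosses every shell it meets \emph{for free} --- it may resume from any point of that shell --- so the naive ``each shell is an obstacle of its own size'' count is invalid. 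Two ingredients rescue the argument: \emph{augmentation}, which turns the essentially free traversals of small shells into genuine path length and thereby prevents a walk from threading cheaply through many small shells between two level-$j$ shells; and the uniform separation bound $D(S_{j},S_{k})\ge Q^{\min(j,k)}$ of Lemma~\ref{lemma:separation}, which guarantees that a free traversal of even a very large shell buys nothing, because the walk must still travel at least $Q^{j}$ to reach that shell starting from a level-$j$ shell.
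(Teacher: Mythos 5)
Your proof follows essentially the same strategy as the paper's: iteratively augment the walk level by level, bound the number of level-$j$ shells by the separation bound from Lemma~\ref{lemma:separation}, and update the length by adding at most $3Q^{j}+4 \le 7Q^{j}$ per shell, yielding the same $\times 15$ growth factor per level.

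One detail where you are more careful than the paper: you explicitly address why teleporting ``for free'' through a shell of level greater than $j$ cannot shortcut the $Q^{j}$-step lower bound between consecutive level-$j$ smooth shells, invoking $D(S_{j},S_{k})\ge Q^{\min(j,k)}$ for the larger intervening shell. The paper treats the augmented walk as simply ``shell-free'' at each stage and relies on the reader to see this, so your version makes an implicit step explicit.

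One detail where you are \emph{less} careful: the conversion $n_{j}\le \ell'_{j-1}/Q^{j}+1 \le 2\cdot 15^{j}\ell/Q^{j}$ requires $\ell'_{j-1}\ge Q^{j}$ for all levels $j \le m$. You wave this away as ``absorbed into the generous constant $2$'' together with a boundary-bookkeeping remark, but the inequality $\ell'_{j-1}/Q^{j}+1 \le 2\ell'_{j-1}/Q^{j}$ genuinely fails when $\ell'_{j-1}<Q^{j}$. The paper explicitly notes that the operability condition $Q^{m}\le d/5-3$ guarantees $\ell'_{k}\ge Q^{k+1}$ for every non-trivial walk and all $k\le m-1$, and defers the full argument to Subsection~\ref{subsec:Hadamard}. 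Since the Lemma's hypotheses include operability, you have the ingredient available; you should invoke it here rather than absorbing the term silently.
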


\begin{figure}[tbp]
    \centering
    \includegraphics[width =1\linewidth]{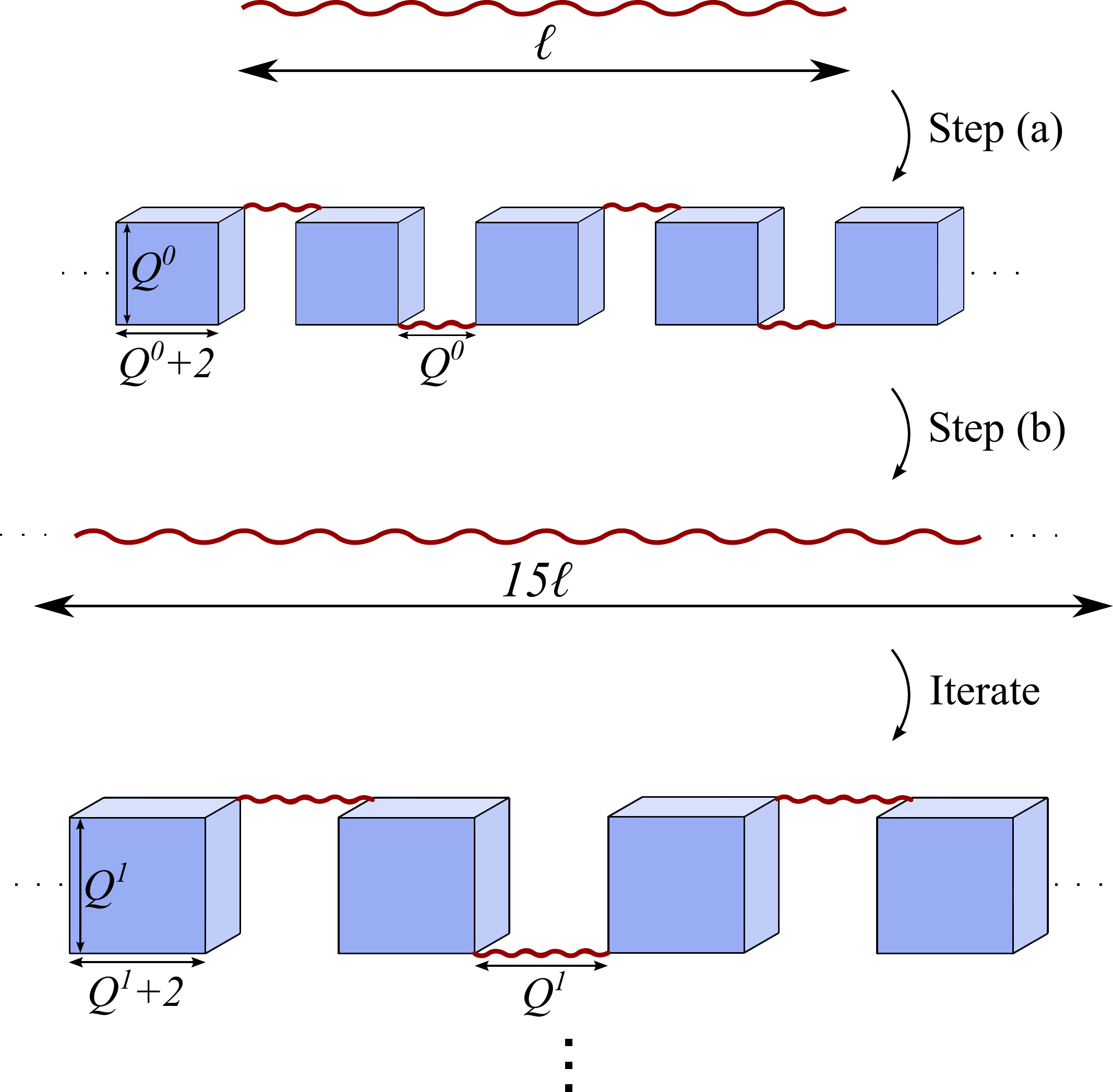}
    \caption{Diagram showing the first steps of the proof of Lemma~\ref{lemma:enc}. We start with a self-avoiding walk of length $\ell$ at the top. Step (a) inserts the maximum number of level $j=0$ shells that a walk can encounter. Step (b) then replaces this augmented walk with a shell-free walk of the same length. We repeat the same steps until the largest level $m$ shells have been accounted for.}
    \label{fig:Lemma}
\end{figure}

\begin{proof}

Consider a long walk in our space-time lattice. It may encounter multiple shells, and the diagonal width of these shells will contribute to the effective walk length. 

We wish to bound the number of shells that the walk may encounter. To do so, we start from a shell-free walk, and insert shells systematically at the highest possible density. We then make a considerable simplification: We simply regard our augmented walk as a shell-free walks of the new, greater length. We then go on to repeat the same shell-insertion now with shells one level greater in size. This will lead to an over-counting of the number of shells encountered, but it will permit us to reach an analytic expression. The approach is illustrated in Fig.~\ref{fig:Lemma}.

We prove the Lemma by induction. Given that $Q\geq 9$, Lemma~\ref{lemma:separation} shows that at most $n_0 = \ell/ Q^0 +1 $ shells of level $j=0$ can be encountered by a walk since any two shells are at least $Q^0$ separated. The $+1$ term arises from considering shells at both ends of the walk. We may further simplify the analysis by upper bounding $n_0 \leq 2\ell/Q^0 $, where we have used the fact that any non-trivial walk has a length $\ell \geq Q^0$ on an operable qubit array. Moreover, the length of the walk $\ell$ together with the diagonal width of all level $0$ shells, $3Q^0 +4$, have a combined effective length at most $\ell_0' = n_0(3Q^0+4)+\ell \leq n_07Q^0 + \ell \leq 14\ell + \ell =  15\ell$. This concludes the base case for $j = 0$.

Next, we assume the Lemma holds for $j=k$ to show that it holds for $j=k + 1 \leq m$. Consider a new shell-free walk with length $\ell_k'$ instead, this allows us to over-count the number of $k+1$ shells encountered by our original walk of length $\ell$. There can be at most $n_{k+1} = \ell_k'/Q^{k+1} +1 \leq 2 \times 15^{k+1}\ell/Q^{k+1}$ shells of level $k + 1$ that can be encountered by such a walk since any two shells of this level are at least $Q^{k+1}$ separated. Here again we have simplified analysis by converting the $+1$ term to a multiplicative factor of $2$.
We justify this simplification due to the fact that every non-trivial walk on the space-time lattice will be long enough such that their respective augmented walk must have length $\ell_{k}' \geq Q^{k+1}$ for $k \leq m -1$ given that the operability condition is satisfied. A more general proof can be found in Subsection~\ref{subsec:Hadamard}.
Now the walk $\ell_k'$, which already accounts for all shells up to level $k$, together with all level $k+1$ shells has a combined effective length at most $\ell_{k+1}' = n_{k+1}(3Q^{k+1}+4)+15^{k+1}\ell \leq n_{k+1}7Q^{k+1} + 15^{k+1}\ell  \leq  15^{k+2}\ell$.

We repeat the inductive steps until the largest shells of level $m$ have been accounted for.
\end{proof}

With Lemma~\ref{lemma:enc} in hand we can obtain a bound for Eqn.~\ref{Eqn:LogicalFailureRate}. Specifically, we can now find an upper bound for $N(\ell)$, the number of error configurations lying on a path of length $\ell$, and a lower bound for the length of the shortest non-trivial path, $L$.

We recall that the key technical step to finding an upper bound for $N(\ell) \leq  V \times W(\ell) \times C(\ell)$ is to count the set of self-avoiding non-trivial walks of length $\ell$, denoted $V \times W(\ell)$. This upper-bound is justified because the set of random walks of length $\ell$ must contain all of the non-trivial paths over the space-time lattice of the same length.

We multiply $V\times W(\ell)$ by the number of error configurations of at least $\ell / 2$ errors that can lie on a path of length $\ell$. We denote this number $C(\ell)$. It is straightforward to check that $C(\ell) < 2^\ell$. Moreover, the volume of the lattice is $V = d^3$ gives us an upper bound on the possible start locations of a walk. As such we proceed to obtain the number of random walks from a fixed starting point, $ W(\ell)$ where the walks may proceed through some number of shells that we have upper bounded using Lemma~\ref{lemma:enc}.

The first step of the walk may be made in $\nu =  6$ directions in the space-time lattice with $\nu$ the valency of the lattice, any subsequent step can then proceed in at most in $\nu - 1 = 5$ directions in order for the walk to be self-avoiding. We perform $\ell-1$ such steps. This bounds the number of walks with different step configurations as $(6/5)\times 5^{\ell}$.

To upper bound $W(\ell)$ we must account for the shells that the random walk may encounter. We have upper bounded the number of such encounters using Lemma~\ref{lemma:enc}. We have that
\begin{equation}
\label{eq:walkCount}
    W(\ell) \leq 6 \times 5^{\ell-1}\prod_{j=0}^m (40Q^{2j})^{2\times 15^{j}\ell/Q^j},
\end{equation}
where $40Q^{2j}$ is an upper bound of termination points of a level-$j$ shell (Lemma~\ref{lemma:termination}) that accounts for the number of distinct steps along which a walk can recommence after an encounter, and the exponent $n_j < 2\times 15^{j}\ell/Q^j$ is the upper bound on the number of times a walk of length $\ell$ will encounter distinct level-$j$ shells.
We take the product over shells of all levels that appear on an operable qubit array $j\le m$.

Let us rearrange Eqn.~\ref{eq:walkCount} to obtain a clearer expression for the logical failure rate. We first express the product in Eqn.~\ref{eq:walkCount} as a summation in the exponent. We also take each summation up to infinity to give an upper bound for its value. We find
\begin{align}
 \prod_{j=0}^m (40Q^{2j})^{2\times 15^{j}\ell/Q^j} \leq &
     40^{2 \ell\sum_{j=0}^\infty 15^{j}/Q^j}  \nonumber
     \\ &  \times Q^{4 \ell\sum_{j=0}^\infty j15^{j}/Q^j}.
\end{align}
Finally, using standard geometric series expressions with $Q > 15$ such that $x = 15 / Q < 1$, namely

\begin{equation}
\sum_{n=0}^\infty x^n = \frac{1}{(1-x)}, \quad \textrm{ and }\quad  \sum_{n=0}^\infty n x^n = \frac{x}{(1-x)^2},
\end{equation} 
we obtain
\begin{equation}
    W(\ell)   \leq   6 \times 5^{\ell-1}\times 
     40^{2\ell Q/(Q-15)} \times Q^{60\ell Q/(Q-15)^2}.
\end{equation}
We can therefore upper bound $N(\ell )$ as
\begin{equation} \label{Eqn:ConfigsUpperBound}
    N(\ell) < V W(\ell) C(\ell) \leq c d^3 \rho^\ell
\end{equation}
where $c = 6/5$ and $\rho = 2\times 5 \times 40^{2Q/(Q-15)}Q^{60Q/(Q-15)^2}$ are constants.

The lower bound $L$ is also readily obtained from Lemma~\ref{lemma:enc}. Let $d$ be the linear size of the space-time lattice, then any non-trivial augmented walk must have an effective length $\ell_m' \geq d$. 
Using Eqn.~\ref{Eqn:Augmented} we have that the maximum effective length that a walk of length $\ell$ can achieve is  $\ell_m' \le 15^{m+1}\ell$ when $m \leq \log_Q   (d/2 - 1)$ in the case of an operable qubit array.
Therefore, any non-trivial walk must be at least of length
\begin{equation}
    L = 15^{-(m+1)}\ell_m' = 15^{-(m+1)}d. \label{eq:Shortest}
\end{equation}

We can now combine the equations we have obtained to complete our proof. Substituting our expression for $N(\ell)$ given in Eqn.~\ref{Eqn:ConfigsUpperBound} into Eqn.~\ref{Eqn:LogicalFailureRate} together with Eqn.~\ref{eq:weight} enables us to bound the logical failure rate as follows
\begin{equation}
    \overline{P}  \leq c d^3  \sum_{\ell\geq L} \rho^\ell \epsilon^{\ell/2} = 
  k d^3 \left(\rho \epsilon^{1/2}\right)^{L}  = 
    k d^3 \exp(-\Theta(L)), \label{Eq:Intermediate}
    \end{equation}
    with $ k = c / (1 - \rho \epsilon^{1/2}) $ a positive constant term  for $\rho \epsilon^{1/2} < 1$. 
    
Substituting Eqn.~\ref{eq:Shortest} into Eqn.~\ref{Eq:Intermediate} reveals the exponential decay in the logical failure rate. We have that
    \begin{equation} \label{eq:finale}
       \overline{P}  \leq  
     k d^3 \exp(-\Theta(15^{-(m+1)}d)) \leq k d^3 \exp(-\Theta(d^\eta)),
\end{equation}
where $\eta = 1 - \log_Q(15)$.
By observation we obtain an exponential decay in logical failure rate provided $Q > 15$ and $\rho \epsilon^{1/2} < 1$. A lower-bound for the threshold for phenomenological noise is therefore $ \epsilon_0 \equiv 1/\rho^2 $, thereby completing the proof of our main theorem.

\section{Universal quantum computation}
\label{sec:FTQC}

Universal quantum computation can be performed with the surface code using code deformation~\cite{Dennis2002, Raussendorf2006, Bombin2009, Fowler2012, Horsman2012, Brown2017}. Indeed, the computational techniques we have developed in the previous section are readily used to demonstrate a vanishing logical failure rate for fault-tolerant logic gates using a surface code architecture with fabrication defects.

Let us show that universal fault-tolerant quantum computation is possible with a planar array of qubits with fabrication defects. For this proof we choose a minimal set of universal gates but remark that, in principle, our arguments can be adapted for an arbitrary choice of code deformations with the surface code~\cite{Brown2017, Litinski2019}. We take a Hadamard gate as presented in Ref.~\cite{Dennis2002} and we perform entangling operations with lattice surgery~\cite{Horsman2012}. Given these fault-tolerant gates, we can complete a universal gate set by distilling magic states~\cite{Bravyi05} and Pauli-Y eigenstates~\cite{Raussendorf2006, Fowler2012}. We conclude this section by showing that the logical error rate of magic-state injection is bounded on a defective lattice.

\subsection{Hadamard gate}
\label{subsec:Hadamard}

In Ref.~\cite{Dennis2002} the authors propose performing a Hadamard gate by rotating the corners of the planar code about its boundary~\cite{Brown2017} before performing a transversal Hadamard operation followed by a swap operation. See also Ref.~\cite{Fowler2012}. The latter unitary operation only maps between the two different types of error strings through a plane in the space-time lattice, so let us concentrate on the physical transformation to the lattice boundaries. We show this operation of the space-time lattice with shells in Fig.~\ref{fig:HadamardG}. 

The transformation of the surface code boundaries must be done in way that preserves the code distance. As one can see in Fig.~\ref{fig:HadamardG}(b), during the switch of the boundaries, we may have purely diagonal paths of errors that lead to a logical error compared to the static quantum memory case. For the threshold theorem to hold, we only need to reconsider the proof of Lemma~\ref{lemma:enc} by taking into account such diagonal paths.
Lemma~\ref{lemma:enc} already captures the diagonal extent of the shells, however, for it to hold, one needs to show that every non-trivial walk will be still long enough such that their respective augmented walk has length $\ell_{k}' \geq Q^{k+1}$ for $k \leq m-1$. We prove it formally.

\begin{proof}
Consider a single largest shell of level $m$ along a non-trivial walk. Then given an operable qubit array ($Q^m \leq d/5 -3$), the diagonal width of such a shell is at most $3Q^m + 4 \leq 3/5d - 5$, where $d$ is the distance of the code. Since $d$ is preserved during the transformation of boundaries, the respective augmented walk of a non-trivial walk with all shells of level $\leq m-1$ has to be at least $\ell_{m-1}' \geq d-(3/5d - 5) = 2/5d + 5 \geq d/5 -3 \geq Q^m$, where we have subtracted the maximum diagonal width of a single level $m$ shell. If there are more than one shell of level $m$ along the non-trivial walk, then by Lemma~\ref{lemma:separation}, their mutual separation is at least $Q^m$ and the result holds trivially.

Moreover, having this result hold for the biggest shells of level $m$ implies that it holds for all shells of level $\leq m$. We show it by induction, where we have already proven that $\ell_{j}' \geq Q^{j+1}$ for the base case $j = m-1$. 

Now, we show that if the result holds for the $j = k$ case then it also holds for $j = k-1$ case. Consider a single shell of level $k$ along a non-trivial walk. It has a diagonal width of $3Q^{k} +4$. Given that $\ell_{k}' \geq Q^{k+1}$ holds from the $j = k$ case, the respective augmented walk of a non-trivial walk with all shells of level $\leq k-1$ has to be at least $\ell_{k-1}' \geq Q^{k+1} - (3Q^{k} + 4) \geq Q^{k}$. Similar to the previous argument, if there are more than one shell of level $k$ along the non-trivial walk, then by Lemma~\ref{lemma:separation}, their separation to any shell of level $k$ or greater is at least $Q^{k}$ and the result holds trivially. 

We may repeat the inductive steps until we have proven that $\ell_{k}' \geq Q^{k+1}$ for any $0 \leq k\leq m - 1$.
\end{proof}

This proof ensures that Lemma~\ref{lemma:enc} holds. The rest of the proof of the threshold follows from the memory case.

\begin{figure}[tbp]
    \centering
    \includegraphics[width =1\linewidth]{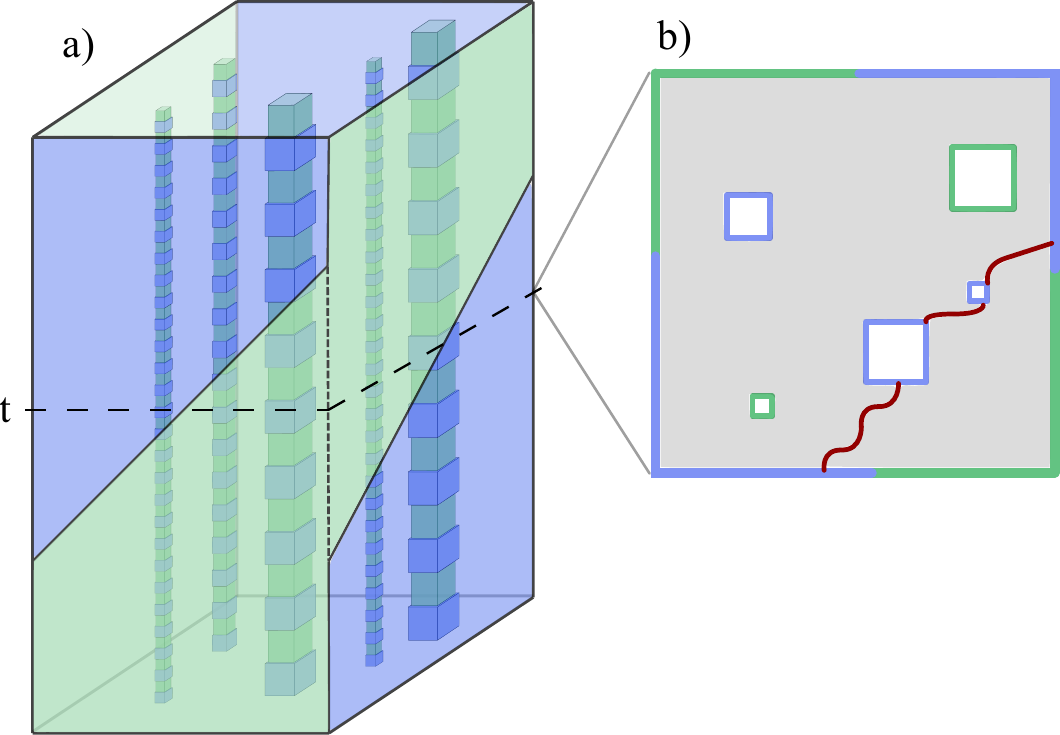}
    \caption{(a) Implementation of a Hadamard gate in the space-time picture. We show a possible path of errors (b) that lead to a logical failure during the switch of the boundaries at time $t$.}
    \label{fig:HadamardG}
\end{figure}

\subsection{Lattice surgery}

Lattice surgery~\cite{Horsman2012} was introduced to perform entangling operations between surface codes by parity measurements. In its simplest form, two adjacent surface code patches are merged using a routing space. The parity measurement is made using low-weight stabilizer measurements of the merged code, as the logical parity measurement is a member of the stabilizer group of the merged code. Once we have repeated the stabilizer measurements of the merged codes over a long time; $O(d)$ rounds of stabilizer measurements, we separate the two codes to complete the parity measurement by collapsing the routing space with a product state measurement.

One should check that we can obtain the value of the logical parity measurement if the routing space contains a fabrication defect.
Naturally, our shell construction can be used to reliably evaluate the value of the necessary super-stabilizers needed to complete the logical parity measurement near to the punctures.

It is readily checked that the logical failure rate of lattice-surgery operations is suppressed using a qubit array with fabrication defects.
Broadly speaking, assuming an operable qubit array, the logical failure rate is governed by two factors. The dominant factor is the exponential suppression of a non-trivial path being introduced to the space-time volume. This factor overwhelms the other factor of the polynomial volume of the space-time lattice. 

We restate this in a more formal sense.
Unlike the case of a quantum memory that we have already considered, logical errors during lattice surgery operations can also occur if string-like errors extend along the temporal direction of the routing space while the codes are merged. In Lemma~\ref{lemma:enc} we have already considered paths that traverse the temporal direction. As such, the only quantitative change from the memory case proof is the factor $V$ denoting the starting points of a walk. Considering that the volume of an efficient lattice surgery operation, which includes the volume of the two codes as well as that of the routing space, is also polynomial, $\sim d^3$, the number of logical string starting points is of order $V = O(d^3)$. The exact number depends on the size of the routing space. Therefore, we can rewrite the Eqn.~\ref{eq:finale} as
\begin{equation}
           \overline{P} \leq k O(d^3) \exp(-\Theta(d^\eta)),
\end{equation}
where the exponential suppression of the logical error probability with respect to the code distance is apparent.

We finally remark that, given a qubit array that is large enough to perform a lattice-surgery operation, it is exponentially likely with respect to the code distance that we will obtain an operable qubit array, assuming the fabrication error rate is sufficiently low. This is again due to the fact that the width of the array needs only to be of order $O(d)$.

\subsection{$S$ and $T$ gates}

\begin{figure*}[tbp]
    \centering
    \includegraphics[width =0.7\linewidth]{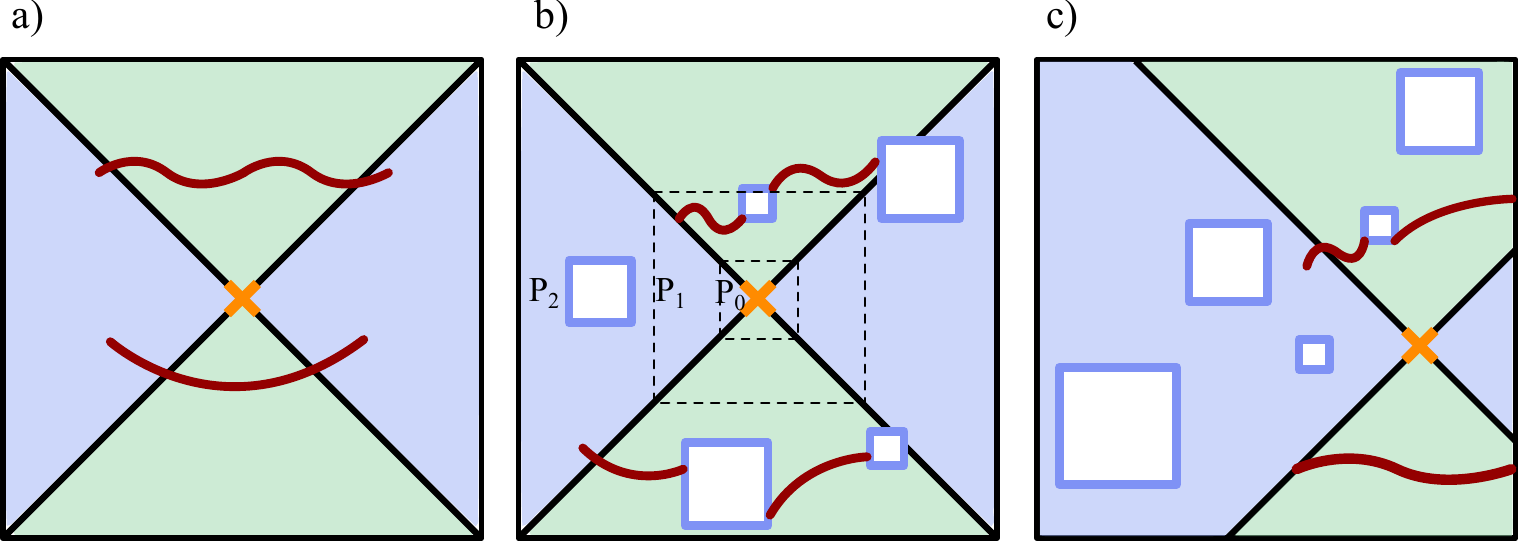}
    \caption{(a) State injection scheme, where the injection qubit is denoted with an orange cross and blue(green) qubits are initialised in the $|+\rangle$($|0\rangle$) basis. Error strings in space-time (red) may form from one boundary to the respective other boundary and cause a logical failure. (b) Fabrication defects increase the noise acting on the logical input state. Panels $P_j$ are defined in the main text whereby no level-$j$ shell is supported on panel $P_j$. The orange cross marks a \textit{good injection point}. (c) In general, a good injection point may be found off-centre.}
    \label{fig:MSI}
\end{figure*}

Given the aforementioned logical gates, we can complete a universal gate set with distillation circuits. We complete the Clifford gate set with a distillation circuit for an eigenstate of the Pauli-Y operator~\cite{Raussendorf2006, Fowler2012}. This is done using noisy state preparation, together with the Hadamard gates and lattice surgery operations that we have already discussed. Then, given the ability to perform all of the Clifford gates and inject noisy ancilla states, we can use magic state distillation to perform universal quantum computation~\cite{Bravyi05}.

It remains to discuss state injection~\cite{Li2015, Lodyga2015} on a two-dimensional array with the independent and identically distributed fabrication defect model we have assumed throughout this work. In the ideal case, we initialise noisy ancilla states by preparing the qubit array with some qubits in the $|+\rangle$ state and others in the $|0\rangle $ state, together with a single qubit prepared in the input state $|\psi \rangle$. We then begin measuring the stabilizers of the surface code to encode the input state. We show a suitable input configuration in Fig.~\ref{fig:MSI}(a), where blue(green) qubits are initialised in the $|+\rangle$($|0\rangle$) basis, respectively, and the central qubit is prepared in the state $|\psi \rangle$.

Provided we can argue that our shell construction does not limit our ability to prepare an input state with a bounded with a logical error rate, we can fault-tolerantly implement both of the proposed gates by distillation. Here we show that we can initialise a state with a finite logical error rate that does not diverge with the distance of the code below some threshold rate of phenomenological noise. Our argument assumes that we can find a good injection point, as we define below. Once we argue that we can inject a state with bounded probability using one such injection point, we will also argue that we can find a good injection point on a region of an operable qubit array with high probability.

It is possible to initialise an ancilla state given a \textit{good injection point}, i.e., a region of the lattice that is distant from fabrication defects. We regard any unit cell $v$ as a good injection point if
\begin{equation}
\label{eq:FTQC1}
   D(v, u_j) \geq 2Q^j + 2
\end{equation}
is satisfied, where $u_j$ describes a unit cell contained within a level-$j$ shell. This means that a good injection point is further away from larger clusters of fabrication defects compared to the smaller ones.

It is now our goal to show that the logical error rate of an input ancilla state is bounded given a good injection point. The  main technical step we use to obtain an expression for the logical failure rate is to count the number of ways in which a logical error can occur. Again, we find this value in terms of random walks. We upper bound the number of error configurations by counting the number of shells a walk can encounter in the vicinity of a good injection point. We also require a lower bound on the length of a walk from a given starting location to obtain an upper bound on the error rate for the input ancilla state.

To begin, assuming a good injection point, we can use the result from Lemma~\ref{lemma:enc} to upper bound the number of shells that a non-trivial walk may encounter. It is justified to use Lemma~\ref{lemma:enc} because any non-trivial self-avoiding walk that encounters a level $j$ shell has $\ell_{j-1}' \geq Q^j$ where $\ell_{j-1}'$ is the augmented walk excluding level $j$ shells or larger. This justification is depicted geometrically in Fig.~\ref{fig:MSI}(b) and follows from proof in Subsection~\ref{subsec:Hadamard}. By letting $Q\geq 9$, we can therefore say that any non-trivial self-avoiding walk of length $\ell$ from one boundary to the respective other boundary encounters at most $ 2\times \frac{15^{j}L}{Q^j} $ shells of level $j$.

We next count the number of starting points from which a random walk can begin. To simplify the following calculation we discretise the qubit array in disjoint panels around the injection point as shown in Fig.~\ref{fig:MSI}(b). The lattice is discretised such that panel $P_j$ has edges that are a distance of $2Q^j + 2$ away from the good injection point. This enables us to upper bound the number of starting points $s_j$ of any walk origination from within panel $P_j$ as $s_j \leq (4Q^j +4)^2$.

Given the arguments we have presented we can upper bound the number of error configurations that give rise to a non-trivial path of length $\ell$ originating from panel $P_j$ as
\begin{equation}
    N_j (\ell) \leq c (4Q^j +4)^2 \rho^\ell,
\end{equation}
where $c = 6/5$ and $\rho = 2\times 5 \times 40^{2Q/(Q-15)}Q^{60Q/(Q-15)^2}$ are constants as before.

Next we need to find a lower bound on the minimal walk length to obtain the logical error rate. 
Any non-trivial walk starting from within a panel $P_j$ must have an effective length $\ell' \geq 2Q^{j-1}$. Due to Eqn.~\ref{eq:FTQC1} no shell of level $j$ may be found within the panel $P_j$.
This implies that the minimum length of any non-trivial walk originating from panel $P_j$ must satisfy $L_j \geq   15^{-j}\ell' \geq 15^{-j}(2Q^{j-1})$. We note that this expression accounts for the case where the injection point is found close to the boundary of the surface code, see Fig.~\ref{fig:MSI}(c).

Taking the expressions we have obtained for $N_j(\ell)$ and $L_j$ for each panel $P_j$ and recalling Eqns.~\ref{Eqn:LogicalFailureRate} and~\ref{eq:weight} we bound the logical failure rate
\begin{multline}
\label{eq:expS}
    \overline{P}_\textrm{in} \leq \sum_{j=0}^{n} \sum_{\ell \geq L_j} N_j(\ell) \epsilon^{\ell / 2} \\
    \leq c \sum_{j=0}^n s_j  \sum_{\ell \geq L_j} \rho^\ell \epsilon^{\ell / 2} = 
     k \sum_{j=0}^n s_j (\rho \epsilon^{1/2})^{L_j} \leq O(1)
\end{multline}
as long as $\rho \epsilon^{1/2} < 1$ which is satisfied for $\epsilon  < 1/\rho^2 \equiv \epsilon_0 $, where $k = c / (1-\rho \epsilon^{1/2})$ and $n\geq m+1$ denotes the total number of panels. Consequently, there must exist a phenomenological error rate independent of the system size below which $\overline{P}_\textrm{in}$ is low enough for the encoded qubit to be used in the distillation protocol.

Finally, we show that a good injection point can be found with exponentially high probability for a fabrication defect rate below some threshold. We argue that this is the case as long as $Q\geq 33$ and the biggest cluster of level $m$ has size $Q^m \leq d/5-3$. As can be seen in Ref.~\cite{Bravyi2011} this condition is satisfied with exponentially large probability with respect to the qubit array size below the fabrication defect rate $f_0 = (3Q)^{-4}$.

From the properties of cluster decomposition (c.f. Section~\ref{subsec:ErrorModel}) any cluster of level $j$ is separated from all other clusters of level $k\geq j$ by distance $Q^{j+1}/3$. Hence, for $Q\geq 33$ any two shells of the same level $j\geq 1$ are separated by distance at least $3(2Q^j+2)$. This allows us to find a region of linear size $2Q^j+2$ between these shells such that the cells within this region are separated by distance at least $2Q^j+2$ from all shells of level $j$. Now, this region may contain a shell of level $j-1$. We can use the same argument to find an even smaller region in which all cells are separated from all level $j-1$ shells by distance at least $2Q^{j-1}+2$. We can continue to iterate to smaller and smaller regions until level $0$. The region found at level $0$ corresponds to the set of good injection points. Note that we require $Q^m \leq d/5-3$ to ensure that for all $m$ the region of good injection points is found within the boundaries of the surface code.

We have found that given a good injection point we may encode a qubit in an arbitrary state with a finite rate of logical failure. Furthermore, we have shown that such a good injection point can be found with exponential probability with respect to the system size below some fabrication defect rate. Therefore, together with fault-tolerant implementation of lattice surgery and hadamard gates we find that we may perform state distillation protocols for both $S$ and $T$ logical gates fault-tolerantly.

\section{Cosmic rays and other time-correlated errors}
\label{sec:CR}

There are a number of events that can lead to errors that compromise a two-dimensional qubit architecture over a long time. Our technical results assumed we can determine the the locations of permanent time-correlated offline before we perform a quantum computation. However, there are time-correlated errors that we might expect our architecture to experience while our computation is online. Let us discuss here how we begin measuring a shell once a time-correlated error is detected, and how we might detect a time correlated error.

One example are the errors introduced by cosmic rays, where the qubit array absorbs a large amount of energy that significantly increases the error rate of the qubits in a region. We can expect the qubits to return to their operational state as the energy dissipates. However, this may take a long time, and it may be worthwhile isolating the qubits from the error-correcting system using our shell protocol while the energy dissipates. Leaving qubits that are above the threshold error rate within the system will be particularly problematic during computations, as it will be very difficult to learn the value of logical parity measurements reliably in operations such as, for instance, lattice surgery. Let us discuss how we might generalise our protocol to deal with more general time correlated errors. We show the idea in Fig.~\ref{fig:CosmicRays}.

Upon the discovery of a time-correlated error that occurs at time $t_1$ in Fig.~\ref{fig:CosmicRays}, we must begin the measurement of super stabilizers as discussed in Sec.~\ref{sec:UCnCD} to isolate the inoperable parts of the qubit array from the code. We begin by performing the single-qubit Pauli-X measurements at time $t_2$ to infer the value of $\mathcal{A}_P$ where $P$ is a puncture that completely encloses the time-correlated error. For now we assume we can rapidly determine a time-correlated error has occured within time $\Delta = t_2 - t_1$. Thereon we continue to perform our shell protocol as normal where we alternate between measuring the stabilizers corresponding to a rough and a smooth boundary about $P$ where we alternate with a frequency that is proportional to the radius of $P$. 

\begin{figure}[tbp]
    \centering
    \includegraphics[width =0.7\linewidth]{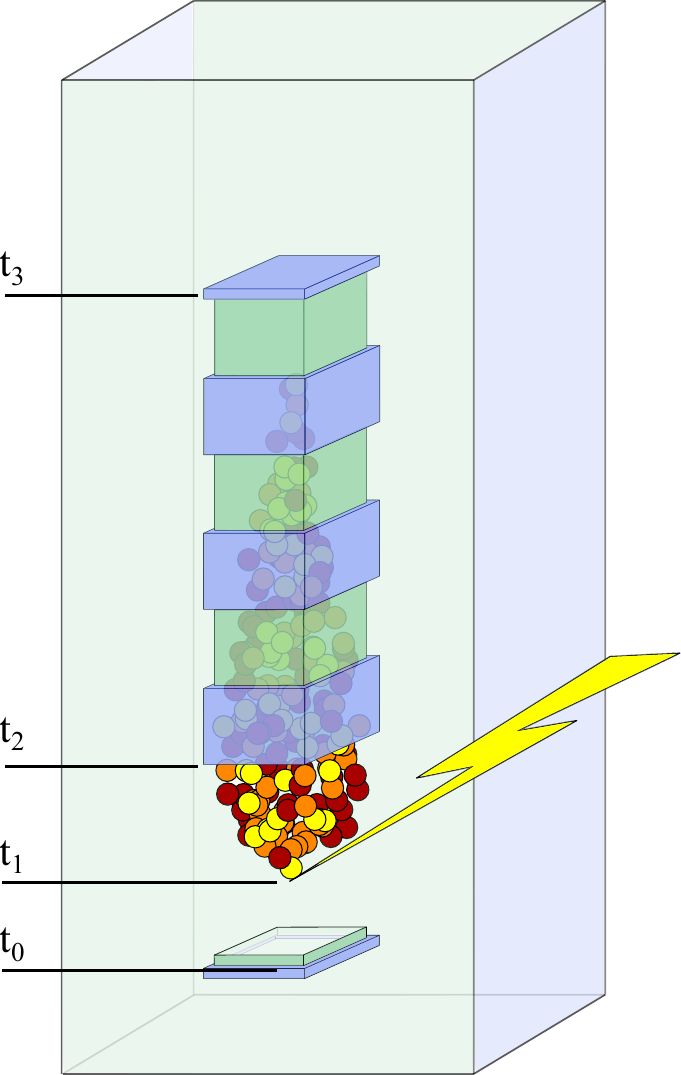}
    \caption{A cosmic ray event. A cosmic ray hits our qubit array at time $t_1$. We use the time between $t_1$ and $t_2$ to detect the event before isolating the affected qubits with shells. Stabilizer measurements before the cosmic ray hits (e.g. at $t_0$) can be used to infer the base value of the shells. We wait until the energy due to the cosmic ray dissipates before initialising all affected qubits back into the error-correcting system at time $t_3$.}
    \label{fig:CosmicRays}
\end{figure}

We can also measure shells that identify the parity of error in a space-time region that encloses the moment the time-correlated error began, $t_1$. Assuming the code was initialised much earlier on, we obtained the value of both super stabilizers reliably, that is, up to measurement errors that we can identify, at some time $t_0 < t_1$ before the moment the time-correlated error began. Indeed, these super-stabilizer operators are members of the stabilizer group before the time-correlated error occurred, and we can learn their values by taking the product of the star and plaquette operators before the time-correlated error began, see Fig.~\ref{fig:CosmicRays}. We can compare the values of these stabilizers to the first measurements of the super-stabilizers in the shell protocol to identify strings that enter into the region that contains the genesis of the time correlated error to obtain the values of the first shells associated to this puncture that is projected in the spacetime lattice.

We might also consider the discovery of a time-correlated error that occurs very soon after the surface code is initialised where there may not have been enough time to learn the value of the super-stabilizer. In which case, as discussed in Sec.~\ref{sec:UCnCD}, we can begin the shell protocol around the new time-correlated error to initialise the super stabilizers in a known eigenstate.

It may be that we can repair the qubits that experience a time-correlated error while the system is online. Likewise, qubits that have been impacted by, say, a cosmic ray, will become functional again once the energy from the radiation dissipates. In which case we can reinitialise the qubits in the surface code by measuring the stabilizers of the unpunctured code once again in the region $P$. This occurs at time $t_3$ in Fig.~\ref{fig:CosmicRays}. These measurements give a final reading of the super stabilizers to complete the measurement of the last shells that are measured about a given puncture. We simply compare these super-stabilizer readings to the last measurements of $\mathcal{A}_P$ and $\mathcal{B}_P$ that were inferred by the shell protocol. Irreparable qubits can remain isolated until the end of a computation.

We have thus far explained how we can prepare shells around a time-correlated error that appear appears while the qubit array is online running a computation. However, it remains to explain how we can identify the occurrence of time correlated errors. We argue that we can identify a time correlated error by monitoring the frequency that stabilizer detection events appear in time. 
To be clear, we are interested in a practical setting where error rates are, say, at least a factor of ten below the threshold error rate of the code and some time-correlated error occurs that increases the physical error rate above threshold. In which case, to leading order, the occurrence of stabilizer events will also increase by a factor of ten near to the initial event that caused the time-correlated error. We expect to be able to notice this change on a region of the lattice within a short time of its occurrence and isolate the qubits with the shell protocol very quickly. Experimental results~\cite{McEwen2021} have shown that cosmic rays can be identified very rapidly, i.e., on the timescale of a single-round of stabilizer measurements. It is therefore reasonable to assume that $\Delta = t_2 - t_1$ can be very small.

More generally, other work has already considered identifying drifts in noise parameters by measuring the frequency of the occurrence of stabilizer detection event to improve the performance of quantum error-correcting systems by updating the prior information given to decoding algorithms~\cite{Huo2017,Spitz2018, Nickerson2019analysingcorrelated}, or experimental control parameters~\cite{Kelly2016}. It may even be advantageous to use the shell protocol to isolate regions of the lattice where the error rate increases dramatically, even if the error rate of the respective qubits remains below threshold.

The spatial locations of time-correlated errors may drift over time. This more general case was considered in Ref.~\cite{Bombin2016}. We can monitor the statistics of the occurrence of stabilizer events near to punctures to detect drifts. Given that we can identify such a drift by monitoring the statistics of the occurrence of detection events near to a time correlated error, we can change the size and shape of our shells over time to isolate a mobile time-correlated error from the qubit array.

One other type of time-correlated errors are those where a stabilizer measurement device becomes unresponsive and always reports the no-event outcome, independent of the state of the qubit array. This was called a `silent-stabilizer error' in Ref.~\cite{Waintal19}. We can propose a straight forward method to identify such errors and thereafter treat them as time-correlated errors as discussed above. Indeed, we can continually test the responsiveness of our stabilizer readout circuits by manually applying Pauli rotations to our qubits in known locations such that, assuming no other errors occur, the state of a stabilizer degree of freedom is changed with probability one.

Let us elaborate on this proposal such that we can also identify the standard physical errors the code is designed to detect. One can find a Pauli operator that rotates \emph{every} stabilizer degree of freedom in between each round of stabilizer measurements. Now, we continue to measure stabilizers as normal with this additional step in the periodic readout circuit. However, instead of looking for detection events by looking for odd parity between two time-adjacent stabilizer measurements, we look for even parity between two measurements to detect an event. This circuit will identify silent stabilizer errors very quickly, as they will produce events at every single time step. We can therefore quickly identify the unresponsive stabilizer and isolate it from the system as a typical fabrication defect.

\section{Discussion}
\label{sec:Disc}

We have shown that we can perform an arbitrarily large quantum computation with a vanishing logical failure rate using a planar array of noisy qubits that has a finite density of fabrication defects. We obtained our result constructively, by proposing a syndrome readout protocol for two-dimensional architectures that are now under experimental development. This represents a substantial improvement upon existing proposals to deal with time-correlated errors that make use of single-shot error correction, as known single-shot codes require a three-dimensional architectures that may be more difficult to manufacture.

To employ our protocol it is necessary to establish the locations of fabrication defects or other anomalies such as cosmic ray events. This can be achieved either offline, or during the execution of computations by studying the frequency with which the qubit array produces detection events. In contrast, single-shot codes remain functional even if they operate unaware of information about the locations of time-correlated errors~\cite{Bombin2016}. It will be interesting to find the best ways of inferring the locations of time correlated errors in two-dimensional systems to improve their performance~\cite{Huo2017,Spitz2018, Nickerson2019analysingcorrelated, Flammia20, Harper20, wagner2021pauli}, as well as to compare their performance in terms of resource overhead to that of three-dimensional single-shot protocols~\cite{Bombin2015, kubica2021singleshot} for dealing with time-correlated errors.

Further afield, it will be interesting to consider the challenges that a finite rate of fabrication defects presents for the task of algorithm compilation~\cite{Gidney2019efficientmagicstate, Litinski2019}. We typically assume that all of the logical qubits and fault-tolerant gates involved in the compiled form of an algorithm will perform uniformly well. However, in practice, local regions of the qubit array that happen to have a larger density of fabrication defects will lead to lower-grade logical qubits for a given number of physical qubits. We would therefore wish our compiler to be {\it hardware-aware} and capable of adapting to such inhomogeneities. It might, for example, allocate a larger area of the qubit array to produce better logical qubits with a higher code distance in lower-grade regions. The discrepancy in code distance would then have consequences for the implementation of logical gates, potently giving rise to inefficiencies due to the fact that logical qubits that occupy differently sized or shaped regions of the qubit array may not tessellate. Alternatively, we might consider allocating the lower-grade regions of the array to heralded operations that can be performed offline. This will mean that the subroutines that are performed by more fault-prone logical qubits can be isolated from the larger calculation until they have been thoroughly tested for errors. Indeed, the compilation of topological circuits with an inhomogeneous qubit array remains an unexplored area of study.

We demonstrated a threshold using a protocol with a number of conservative simplifications to find an analytical expression for the logical failure rate. However, we can expect variations of our protocol that forego the simplifications or adapted variations of previously suggested methods~\cite{Auger2017} to perform better in practice when we take the system architecture in consideration. It will be valuable to compare them numerically to optimise the performance of a qubit array. Indeed, subsequent to this paper appearing online, two numerical studies based on our shell protocol have appeared which report highly encouraging performance with respect to realistic defect rates~\cite{siegel2022adaptive,lin2023empirical}.

A further route for future research would be to modify our proposal for other topological codes. Our protocol adapts readily to other variations of the surface code with an equivalent boundary theory~\cite{bravyi2013subsystem, Chamberland2020, BonillaAtaides2021, Higgott2021}, but we may also find that generalisations of our protocol using general topological codes~\cite{Kitaev03, Levin05, Bombin06, Hastings2021dynamically} with richer boundary theories~\cite{Beigi11,Kitaev12, Levin13, Barkeshli13, Kesselring18, vuillot2021planar} can outperform our surface code protocol. This future work will determine how best to nullify the harmful effects of fabrication defects in modern qubit architectures.

\begin{acknowledgements}
We thank R. Babbush, H. Bomb\'{i}n, N. Bronn, J.~Chavez-Garcia, A. Doherty, S. Flammia, A. Fowler, C. Gidney, C. Jones, M. Kjaergaard, J. Martinis, M. Newman, B. Vlastakis and A. Zalcman for conversations about fabrication defects and cosmic rays.
BJB is supported by the European Union’s Horizon 2020 research and innovation programme under the Marie Skłodowska-Curie grant agreement No. 897158, and also received support from the Australian Research Council via the Centre of Excellence in Engineered Quantum Systems (EQUS) project number CE170100009. SCB acknowledges support from the EPSRC QCS Hub EP/T001062/1, from U.S. Army Research Office Grant No. W911NF-16-1-0070 (LOGIQ), and from EU H2020-FETFLAG-03-2018 under the grant agreement No 820495 (AQTION). This research was funded in part by the UKRI grant number EP/R513295/1.
\end{acknowledgements}

\bibliography{export2.bib}

\begin{thebibliography}{63}%
\makeatletter
\providecommand \@ifxundefined [1]{%
 \@ifx{#1\undefined}
}%
\providecommand \@ifnum [1]{%
 \ifnum #1\expandafter \@firstoftwo
 \else \expandafter \@secondoftwo
 \fi
}%
\providecommand \@ifx [1]{%
 \ifx #1\expandafter \@firstoftwo
 \else \expandafter \@secondoftwo
 \fi
}%
\providecommand \natexlab [1]{#1}%
\providecommand \enquote  [1]{``#1''}%
\providecommand \bibnamefont  [1]{#1}%
\providecommand \bibfnamefont [1]{#1}%
\providecommand \citenamefont [1]{#1}%
\providecommand \href@noop [0]{\@secondoftwo}%
\providecommand \href [0]{\begingroup \@sanitize@url \@href}%
\providecommand \@href[1]{\@@startlink{#1}\@@href}%
\providecommand \@@href[1]{\endgroup#1\@@endlink}%
\providecommand \@sanitize@url [0]{\catcode `\\12\catcode `\$12\catcode
  `\&12\catcode `\#12\catcode `\^12\catcode `\_12\catcode `\%12\relax}%
\providecommand \@@startlink[1]{}%
\providecommand \@@endlink[0]{}%
\providecommand \url  [0]{\begingroup\@sanitize@url \@url }%
\providecommand \@url [1]{\endgroup\@href {#1}{\urlprefix }}%
\providecommand \urlprefix  [0]{URL }%
\providecommand \Eprint [0]{\href }%
\providecommand \doibase [0]{https://doi.org/}%
\providecommand \selectlanguage [0]{\@gobble}%
\providecommand \bibinfo  [0]{\@secondoftwo}%
\providecommand \bibfield  [0]{\@secondoftwo}%
\providecommand \translation [1]{[#1]}%
\providecommand \BibitemOpen [0]{}%
\providecommand \bibitemStop [0]{}%
\providecommand \bibitemNoStop [0]{.\EOS\space}%
\providecommand \EOS [0]{\spacefactor3000\relax}%
\providecommand \BibitemShut  [1]{\csname bibitem#1\endcsname}%
\let\auto@bib@innerbib\@empty
\bibitem [{\citenamefont {Shor}(1996)}]{Shor1996}%
  \BibitemOpen
  \bibfield  {author} {\bibinfo {author} {\bibfnamefont {P.}~\bibnamefont
  {Shor}},\ }\bibfield  {title} {\bibinfo {title} {Fault-tolerant quantum
  computation},\ }in\ \href {https://doi.org/10.1109/SFCS.1996.548464} {\emph
  {\bibinfo {booktitle} {Proceedings of 37th Conference on Foundations of
  Computer Science}}}\ (\bibinfo {year} {1996})\ pp.\ \bibinfo {pages}
  {56--65}\BibitemShut {NoStop}%
\bibitem [{\citenamefont {Huang}\ \emph {et~al.}(2020)\citenamefont {Huang},
  \citenamefont {Wu}, \citenamefont {Fan},\ and\ \citenamefont
  {Zhu}}]{Huang2020}%
  \BibitemOpen
  \bibfield  {author} {\bibinfo {author} {\bibfnamefont {H.-L.}\ \bibnamefont
  {Huang}}, \bibinfo {author} {\bibfnamefont {D.}~\bibnamefont {Wu}}, \bibinfo
  {author} {\bibfnamefont {D.}~\bibnamefont {Fan}},\ and\ \bibinfo {author}
  {\bibfnamefont {X.}~\bibnamefont {Zhu}},\ }\bibfield  {title} {\bibinfo
  {title} {Superconducting quantum computing: a review},\ }\href
  {https://doi.org/10.1007/s11432-020-2881-9} {\bibfield  {journal} {\bibinfo
  {journal} {Science China Information Sciences}\ }\textbf {\bibinfo {volume}
  {63}},\ \bibinfo {pages} {180501} (\bibinfo {year} {2020})}\BibitemShut
  {NoStop}%
\bibitem [{\citenamefont {Zhang}\ \emph {et~al.}(2018)\citenamefont {Zhang},
  \citenamefont {Li}, \citenamefont {Cao}, \citenamefont {Xiao}, \citenamefont
  {Guo},\ and\ \citenamefont {Guo}}]{Zhang2018}%
  \BibitemOpen
  \bibfield  {author} {\bibinfo {author} {\bibfnamefont {X.}~\bibnamefont
  {Zhang}}, \bibinfo {author} {\bibfnamefont {H.-O.}\ \bibnamefont {Li}},
  \bibinfo {author} {\bibfnamefont {G.}~\bibnamefont {Cao}}, \bibinfo {author}
  {\bibfnamefont {M.}~\bibnamefont {Xiao}}, \bibinfo {author} {\bibfnamefont
  {G.-C.}\ \bibnamefont {Guo}},\ and\ \bibinfo {author} {\bibfnamefont {G.-P.}\
  \bibnamefont {Guo}},\ }\bibfield  {title} {\bibinfo {title} {{Semiconductor
  quantum computation}},\ }\href {https://doi.org/10.1093/nsr/nwy153}
  {\bibfield  {journal} {\bibinfo  {journal} {National Science Review}\
  }\textbf {\bibinfo {volume} {6}},\ \bibinfo {pages} {32} (\bibinfo {year}
  {2018})}\BibitemShut {NoStop}%
\bibitem [{\citenamefont {Bruzewicz}\ \emph {et~al.}(2019)\citenamefont
  {Bruzewicz}, \citenamefont {Chiaverini}, \citenamefont {McConnell},\ and\
  \citenamefont {Sage}}]{Bruzewicz2019}%
  \BibitemOpen
  \bibfield  {author} {\bibinfo {author} {\bibfnamefont {C.~D.}\ \bibnamefont
  {Bruzewicz}}, \bibinfo {author} {\bibfnamefont {J.}~\bibnamefont
  {Chiaverini}}, \bibinfo {author} {\bibfnamefont {R.}~\bibnamefont
  {McConnell}},\ and\ \bibinfo {author} {\bibfnamefont {J.~M.}\ \bibnamefont
  {Sage}},\ }\bibfield  {title} {\bibinfo {title} {Trapped-ion quantum
  computing: Progress and challenges},\ }\href
  {https://doi.org/10.1063/1.5088164} {\bibfield  {journal} {\bibinfo
  {journal} {Applied Physics Reviews}\ }\textbf {\bibinfo {volume} {6}},\
  \bibinfo {pages} {021314} (\bibinfo {year} {2019})}\BibitemShut {NoStop}%
\bibitem [{\citenamefont {Martinis}(2021)}]{Martinis2021}%
  \BibitemOpen
  \bibfield  {author} {\bibinfo {author} {\bibfnamefont {J.~M.}\ \bibnamefont
  {Martinis}},\ }\bibfield  {title} {\bibinfo {title} {Saving superconducting
  quantum processors from decay and correlated errors generated by gamma and
  cosmic rays},\ }\href {https://doi.org/10.1038/s41534-021-00431-0} {\bibfield
   {journal} {\bibinfo  {journal} {npj Quantum Inf}\ }\textbf {\bibinfo
  {volume} {7}},\ \bibinfo {pages} {90} (\bibinfo {year} {2021})}\BibitemShut
  {NoStop}%
\bibitem [{\citenamefont {Wilen}\ \emph {et~al.}(2021)\citenamefont {Wilen},
  \citenamefont {Abdullah}, \citenamefont {Kurinsky}, \citenamefont {Stanford},
  \citenamefont {Cardani}, \citenamefont {D’Imperio}, \citenamefont {Tomei},
  \citenamefont {Faoro}, \citenamefont {Ioffe}, \citenamefont {Liu},
  \citenamefont {Opremcak}, \citenamefont {Christensen}, \citenamefont
  {DuBois},\ and\ \citenamefont {McDermott}}]{Wilen2021}%
  \BibitemOpen
  \bibfield  {author} {\bibinfo {author} {\bibfnamefont {C.~D.}\ \bibnamefont
  {Wilen}}, \bibinfo {author} {\bibfnamefont {S.}~\bibnamefont {Abdullah}},
  \bibinfo {author} {\bibfnamefont {N.~A.}\ \bibnamefont {Kurinsky}}, \bibinfo
  {author} {\bibfnamefont {C.}~\bibnamefont {Stanford}}, \bibinfo {author}
  {\bibfnamefont {L.}~\bibnamefont {Cardani}}, \bibinfo {author} {\bibfnamefont
  {G.}~\bibnamefont {D’Imperio}}, \bibinfo {author} {\bibfnamefont
  {C.}~\bibnamefont {Tomei}}, \bibinfo {author} {\bibfnamefont
  {L.}~\bibnamefont {Faoro}}, \bibinfo {author} {\bibfnamefont {L.~B.}\
  \bibnamefont {Ioffe}}, \bibinfo {author} {\bibfnamefont {C.~H.}\ \bibnamefont
  {Liu}}, \bibinfo {author} {\bibfnamefont {A.}~\bibnamefont {Opremcak}},
  \bibinfo {author} {\bibfnamefont {B.~G.}\ \bibnamefont {Christensen}},
  \bibinfo {author} {\bibfnamefont {J.~L.}\ \bibnamefont {DuBois}},\ and\
  \bibinfo {author} {\bibfnamefont {R.}~\bibnamefont {McDermott}},\ }\bibfield
  {title} {\bibinfo {title} {Correlated charge noise and relaxation errors in
  superconducting qubits},\ }\href {https://doi.org/10.1038/s41586-021-03557-5}
  {\bibfield  {journal} {\bibinfo  {journal} {Nature}\ }\textbf {\bibinfo
  {volume} {594}},\ \bibinfo {pages} {369} (\bibinfo {year}
  {2021})}\BibitemShut {NoStop}%
\bibitem [{\citenamefont {McEwen}\ \emph {et~al.}(2021)\citenamefont {McEwen},
  \citenamefont {Faoro}, \citenamefont {Arya}, \citenamefont {Dunsworth},
  \citenamefont {Huang}, \citenamefont {Kim}, \citenamefont {Burkett},
  \citenamefont {Fowler}, \citenamefont {Arute}, \citenamefont {Bardin},
  \citenamefont {Bengtsson}, \citenamefont {Bilmes}, \citenamefont {Buckley},
  \citenamefont {Bushnell}, \citenamefont {Chen}, \citenamefont {Collins},
  \citenamefont {Demura}, \citenamefont {Derk}, \citenamefont {Erickson},
  \citenamefont {Giustina}, \citenamefont {Harrington}, \citenamefont {Hong},
  \citenamefont {Jeffrey}, \citenamefont {Kelly}, \citenamefont {Klimov},
  \citenamefont {Kostritsa}, \citenamefont {Laptev}, \citenamefont {Locharla},
  \citenamefont {Mi}, \citenamefont {Miao}, \citenamefont {Montazeri},
  \citenamefont {Mutus}, \citenamefont {Naaman}, \citenamefont {Neeley},
  \citenamefont {Neill}, \citenamefont {Opremcak}, \citenamefont {Quintana},
  \citenamefont {Redd}, \citenamefont {Roushan}, \citenamefont {Sank},
  \citenamefont {Satzinger}, \citenamefont {Shvarts}, \citenamefont {White},
  \citenamefont {Yao}, \citenamefont {Yeh}, \citenamefont {Yoo}, \citenamefont
  {Chen}, \citenamefont {Smelyanskiy}, \citenamefont {Martinis}, \citenamefont
  {Neven}, \citenamefont {Megrant}, \citenamefont {Ioffe},\ and\ \citenamefont
  {Barends}}]{McEwen2021}%
  \BibitemOpen
  \bibfield  {author} {\bibinfo {author} {\bibfnamefont {M.}~\bibnamefont
  {McEwen}}, \bibinfo {author} {\bibfnamefont {L.}~\bibnamefont {Faoro}},
  \bibinfo {author} {\bibfnamefont {K.}~\bibnamefont {Arya}}, \bibinfo {author}
  {\bibfnamefont {A.}~\bibnamefont {Dunsworth}}, \bibinfo {author}
  {\bibfnamefont {T.}~\bibnamefont {Huang}}, \bibinfo {author} {\bibfnamefont
  {S.}~\bibnamefont {Kim}}, \bibinfo {author} {\bibfnamefont {B.}~\bibnamefont
  {Burkett}}, \bibinfo {author} {\bibfnamefont {A.}~\bibnamefont {Fowler}},
  \bibinfo {author} {\bibfnamefont {F.}~\bibnamefont {Arute}}, \bibinfo
  {author} {\bibfnamefont {J.~C.}\ \bibnamefont {Bardin}}, \bibinfo {author}
  {\bibfnamefont {A.}~\bibnamefont {Bengtsson}}, \bibinfo {author}
  {\bibfnamefont {A.}~\bibnamefont {Bilmes}}, \bibinfo {author} {\bibfnamefont
  {B.~B.}\ \bibnamefont {Buckley}}, \bibinfo {author} {\bibfnamefont
  {N.}~\bibnamefont {Bushnell}}, \bibinfo {author} {\bibfnamefont
  {Z.}~\bibnamefont {Chen}}, \bibinfo {author} {\bibfnamefont {R.}~\bibnamefont
  {Collins}}, \bibinfo {author} {\bibfnamefont {S.}~\bibnamefont {Demura}},
  \bibinfo {author} {\bibfnamefont {A.~R.}\ \bibnamefont {Derk}}, \bibinfo
  {author} {\bibfnamefont {C.}~\bibnamefont {Erickson}}, \bibinfo {author}
  {\bibfnamefont {M.}~\bibnamefont {Giustina}}, \bibinfo {author}
  {\bibfnamefont {S.~D.}\ \bibnamefont {Harrington}}, \bibinfo {author}
  {\bibfnamefont {S.}~\bibnamefont {Hong}}, \bibinfo {author} {\bibfnamefont
  {E.}~\bibnamefont {Jeffrey}}, \bibinfo {author} {\bibfnamefont
  {J.}~\bibnamefont {Kelly}}, \bibinfo {author} {\bibfnamefont {P.~V.}\
  \bibnamefont {Klimov}}, \bibinfo {author} {\bibfnamefont {F.}~\bibnamefont
  {Kostritsa}}, \bibinfo {author} {\bibfnamefont {P.}~\bibnamefont {Laptev}},
  \bibinfo {author} {\bibfnamefont {A.}~\bibnamefont {Locharla}}, \bibinfo
  {author} {\bibfnamefont {X.}~\bibnamefont {Mi}}, \bibinfo {author}
  {\bibfnamefont {K.~C.}\ \bibnamefont {Miao}}, \bibinfo {author}
  {\bibfnamefont {S.}~\bibnamefont {Montazeri}}, \bibinfo {author}
  {\bibfnamefont {J.}~\bibnamefont {Mutus}}, \bibinfo {author} {\bibfnamefont
  {O.}~\bibnamefont {Naaman}}, \bibinfo {author} {\bibfnamefont
  {M.}~\bibnamefont {Neeley}}, \bibinfo {author} {\bibfnamefont
  {C.}~\bibnamefont {Neill}}, \bibinfo {author} {\bibfnamefont
  {A.}~\bibnamefont {Opremcak}}, \bibinfo {author} {\bibfnamefont
  {C.}~\bibnamefont {Quintana}}, \bibinfo {author} {\bibfnamefont
  {N.}~\bibnamefont {Redd}}, \bibinfo {author} {\bibfnamefont {P.}~\bibnamefont
  {Roushan}}, \bibinfo {author} {\bibfnamefont {D.}~\bibnamefont {Sank}},
  \bibinfo {author} {\bibfnamefont {K.~J.}\ \bibnamefont {Satzinger}}, \bibinfo
  {author} {\bibfnamefont {V.}~\bibnamefont {Shvarts}}, \bibinfo {author}
  {\bibfnamefont {T.}~\bibnamefont {White}}, \bibinfo {author} {\bibfnamefont
  {Z.~J.}\ \bibnamefont {Yao}}, \bibinfo {author} {\bibfnamefont
  {P.}~\bibnamefont {Yeh}}, \bibinfo {author} {\bibfnamefont {J.}~\bibnamefont
  {Yoo}}, \bibinfo {author} {\bibfnamefont {Y.}~\bibnamefont {Chen}}, \bibinfo
  {author} {\bibfnamefont {V.}~\bibnamefont {Smelyanskiy}}, \bibinfo {author}
  {\bibfnamefont {J.~M.}\ \bibnamefont {Martinis}}, \bibinfo {author}
  {\bibfnamefont {H.}~\bibnamefont {Neven}}, \bibinfo {author} {\bibfnamefont
  {A.}~\bibnamefont {Megrant}}, \bibinfo {author} {\bibfnamefont
  {L.}~\bibnamefont {Ioffe}},\ and\ \bibinfo {author} {\bibfnamefont
  {R.}~\bibnamefont {Barends}},\ }\href@noop {} {\bibinfo {title} {Resolving
  catastrophic error bursts from cosmic rays in large arrays of superconducting
  qubits}} (\bibinfo {year} {2021}),\ \Eprint
  {https://arxiv.org/abs/2104.05219} {arXiv:2104.05219} \BibitemShut {NoStop}%
\bibitem [{\citenamefont {Vala}\ \emph {et~al.}(2005)\citenamefont {Vala},
  \citenamefont {Whaley},\ and\ \citenamefont {Weiss}}]{Vala2005}%
  \BibitemOpen
  \bibfield  {author} {\bibinfo {author} {\bibfnamefont {J.}~\bibnamefont
  {Vala}}, \bibinfo {author} {\bibfnamefont {K.~B.}\ \bibnamefont {Whaley}},\
  and\ \bibinfo {author} {\bibfnamefont {D.~S.}\ \bibnamefont {Weiss}},\
  }\bibfield  {title} {\bibinfo {title} {Quantum error correction of a qubit
  loss in an addressable atomic system},\ }\href
  {https://doi.org/10.1103/PhysRevA.72.052318} {\bibfield  {journal} {\bibinfo
  {journal} {Phys. Rev. A}\ }\textbf {\bibinfo {volume} {72}},\ \bibinfo
  {pages} {052318} (\bibinfo {year} {2005})}\BibitemShut {NoStop}%
\bibitem [{\citenamefont {Bermudez}\ \emph {et~al.}(2017)\citenamefont
  {Bermudez}, \citenamefont {Xu}, \citenamefont {Nigmatullin}, \citenamefont
  {O'Gorman}, \citenamefont {Negnevitsky}, \citenamefont {Schindler},
  \citenamefont {Monz}, \citenamefont {Poschinger}, \citenamefont {Hempel},
  \citenamefont {Home}, \citenamefont {Schmidt-Kaler}, \citenamefont {Biercuk},
  \citenamefont {Blatt}, \citenamefont {Benjamin},\ and\ \citenamefont
  {M\"uller}}]{Bermudez2017}%
  \BibitemOpen
  \bibfield  {author} {\bibinfo {author} {\bibfnamefont {A.}~\bibnamefont
  {Bermudez}}, \bibinfo {author} {\bibfnamefont {X.}~\bibnamefont {Xu}},
  \bibinfo {author} {\bibfnamefont {R.}~\bibnamefont {Nigmatullin}}, \bibinfo
  {author} {\bibfnamefont {J.}~\bibnamefont {O'Gorman}}, \bibinfo {author}
  {\bibfnamefont {V.}~\bibnamefont {Negnevitsky}}, \bibinfo {author}
  {\bibfnamefont {P.}~\bibnamefont {Schindler}}, \bibinfo {author}
  {\bibfnamefont {T.}~\bibnamefont {Monz}}, \bibinfo {author} {\bibfnamefont
  {U.~G.}\ \bibnamefont {Poschinger}}, \bibinfo {author} {\bibfnamefont
  {C.}~\bibnamefont {Hempel}}, \bibinfo {author} {\bibfnamefont
  {J.}~\bibnamefont {Home}}, \bibinfo {author} {\bibfnamefont {F.}~\bibnamefont
  {Schmidt-Kaler}}, \bibinfo {author} {\bibfnamefont {M.}~\bibnamefont
  {Biercuk}}, \bibinfo {author} {\bibfnamefont {R.}~\bibnamefont {Blatt}},
  \bibinfo {author} {\bibfnamefont {S.}~\bibnamefont {Benjamin}},\ and\
  \bibinfo {author} {\bibfnamefont {M.}~\bibnamefont {M\"uller}},\ }\bibfield
  {title} {\bibinfo {title} {Assessing the progress of trapped-ion processors
  towards fault-tolerant quantum computation},\ }\href
  {https://doi.org/10.1103/PhysRevX.7.041061} {\bibfield  {journal} {\bibinfo
  {journal} {Phys. Rev. X}\ }\textbf {\bibinfo {volume} {7}},\ \bibinfo {pages}
  {041061} (\bibinfo {year} {2017})}\BibitemShut {NoStop}%
\bibitem [{\citenamefont {Cong}\ \emph {et~al.}(2021)\citenamefont {Cong},
  \citenamefont {Wang}, \citenamefont {Levine}, \citenamefont {Keesling},\ and\
  \citenamefont {Lukin}}]{rydberg2021}%
  \BibitemOpen
  \bibfield  {author} {\bibinfo {author} {\bibfnamefont {I.}~\bibnamefont
  {Cong}}, \bibinfo {author} {\bibfnamefont {S.-T.}\ \bibnamefont {Wang}},
  \bibinfo {author} {\bibfnamefont {H.}~\bibnamefont {Levine}}, \bibinfo
  {author} {\bibfnamefont {A.}~\bibnamefont {Keesling}},\ and\ \bibinfo
  {author} {\bibfnamefont {M.~D.}\ \bibnamefont {Lukin}},\ }\bibfield  {title}
  {\bibinfo {title} {Hardware-efficient, fault-tolerant quantum computation
  with rydberg atoms},\ }\href {https://arxiv.org/abs/2105.13501} {\bibfield
  {journal} {\bibinfo  {journal} {Preprint arXiv:2105.13501}\ } (\bibinfo
  {year} {2021})}\BibitemShut {NoStop}%
\bibitem [{\citenamefont {Gottesman}(1997)}]{Gottesman1997}%
  \BibitemOpen
  \bibfield  {author} {\bibinfo {author} {\bibfnamefont {D.}~\bibnamefont
  {Gottesman}},\ }\href@noop {} {\bibinfo {title} {Stabilizer codes and quantum
  error correction}} (\bibinfo {year} {1997}),\ \Eprint
  {https://arxiv.org/abs/quant-ph/9705052} {arXiv:quant-ph/9705052 [quant-ph]}
  \BibitemShut {NoStop}%
\bibitem [{\citenamefont {Kitaev}(1997)}]{Kitaev1997}%
  \BibitemOpen
  \bibfield  {author} {\bibinfo {author} {\bibfnamefont {A.~Y.}\ \bibnamefont
  {Kitaev}},\ }\bibfield  {title} {\bibinfo {title} {Quantum computations:
  algorithms and error correction},\ }\href
  {https://doi.org/10.1070/RM1997v052n06ABEH002155} {\bibfield  {journal}
  {\bibinfo  {journal} {Russian Mathematical Surveys}\ }\textbf {\bibinfo
  {volume} {52}},\ \bibinfo {pages} {1191} (\bibinfo {year}
  {1997})}\BibitemShut {NoStop}%
\bibitem [{\citenamefont {Kitaev}(2003)}]{Kitaev03}%
  \BibitemOpen
  \bibfield  {author} {\bibinfo {author} {\bibfnamefont {A.}~\bibnamefont
  {Kitaev}},\ }\bibfield  {title} {\bibinfo {title} {Fault-tolerant quantum
  computation by anyons},\ }\href
  {https://doi.org/https://doi.org/10.1016/S0003-4916(02)00018-0} {\bibfield
  {journal} {\bibinfo  {journal} {Annals of Physics}\ }\textbf {\bibinfo
  {volume} {303}},\ \bibinfo {pages} {2} (\bibinfo {year} {2003})}\BibitemShut
  {NoStop}%
\bibitem [{\citenamefont {Dennis}\ \emph {et~al.}(2002)\citenamefont {Dennis},
  \citenamefont {Kitaev}, \citenamefont {Landahl},\ and\ \citenamefont
  {Preskill}}]{Dennis2002}%
  \BibitemOpen
  \bibfield  {author} {\bibinfo {author} {\bibfnamefont {E.}~\bibnamefont
  {Dennis}}, \bibinfo {author} {\bibfnamefont {A.}~\bibnamefont {Kitaev}},
  \bibinfo {author} {\bibfnamefont {A.}~\bibnamefont {Landahl}},\ and\ \bibinfo
  {author} {\bibfnamefont {J.}~\bibnamefont {Preskill}},\ }\bibfield  {title}
  {\bibinfo {title} {Topological quantum memory},\ }\href
  {https://doi.org/10.1063/1.1499754} {\bibfield  {journal} {\bibinfo
  {journal} {Journal of Mathematical Physics}\ }\textbf {\bibinfo {volume}
  {43}},\ \bibinfo {pages} {4452} (\bibinfo {year} {2002})}\BibitemShut
  {NoStop}%
\bibitem [{\citenamefont {Wang}\ \emph {et~al.}(2003)\citenamefont {Wang},
  \citenamefont {Harrington},\ and\ \citenamefont
  {Preskill}}]{wang2003confinement}%
  \BibitemOpen
  \bibfield  {author} {\bibinfo {author} {\bibfnamefont {C.}~\bibnamefont
  {Wang}}, \bibinfo {author} {\bibfnamefont {J.}~\bibnamefont {Harrington}},\
  and\ \bibinfo {author} {\bibfnamefont {J.}~\bibnamefont {Preskill}},\
  }\bibfield  {title} {\bibinfo {title} {Confinement-higgs transition in a
  disordered gauge theory and the accuracy threshold for quantum memory},\
  }\href {https://doi.org/https://doi.org/10.1016/S0003-4916(02)00019-2}
  {\bibfield  {journal} {\bibinfo  {journal} {Annals of Physics}\ }\textbf
  {\bibinfo {volume} {303}},\ \bibinfo {pages} {31} (\bibinfo {year}
  {2003})}\BibitemShut {NoStop}%
\bibitem [{\citenamefont {Fowler}\ \emph {et~al.}(2012)\citenamefont {Fowler},
  \citenamefont {Mariantoni}, \citenamefont {Martinis},\ and\ \citenamefont
  {Cleland}}]{Fowler2012}%
  \BibitemOpen
  \bibfield  {author} {\bibinfo {author} {\bibfnamefont {A.~G.}\ \bibnamefont
  {Fowler}}, \bibinfo {author} {\bibfnamefont {M.}~\bibnamefont {Mariantoni}},
  \bibinfo {author} {\bibfnamefont {J.~M.}\ \bibnamefont {Martinis}},\ and\
  \bibinfo {author} {\bibfnamefont {A.~N.}\ \bibnamefont {Cleland}},\
  }\bibfield  {title} {\bibinfo {title} {Surface codes: Towards practical
  large-scale quantum computation},\ }\href
  {https://doi.org/10.1103/PhysRevA.86.032324} {\bibfield  {journal} {\bibinfo
  {journal} {Phys. Rev. A}\ }\textbf {\bibinfo {volume} {86}},\ \bibinfo
  {pages} {032324} (\bibinfo {year} {2012})}\BibitemShut {NoStop}%
\bibitem [{\citenamefont {Lidar}\ and\ \citenamefont {Brun}(2013)}]{Lidar2013}%
  \BibitemOpen
  \bibfield  {author} {\bibinfo {author} {\bibfnamefont {D.~A.}\ \bibnamefont
  {Lidar}}\ and\ \bibinfo {author} {\bibfnamefont {T.~A.}\ \bibnamefont
  {Brun}},\ }\href@noop {} {\emph {\bibinfo {title} {Quantum error
  correction}}}\ (\bibinfo  {publisher} {Cambridge university press},\ \bibinfo
  {year} {2013})\BibitemShut {NoStop}%
\bibitem [{\citenamefont {Takita}\ \emph {et~al.}(2017)\citenamefont {Takita},
  \citenamefont {Cross}, \citenamefont {C\'orcoles}, \citenamefont {Chow},\
  and\ \citenamefont {Gambetta}}]{Takita2017}%
  \BibitemOpen
  \bibfield  {author} {\bibinfo {author} {\bibfnamefont {M.}~\bibnamefont
  {Takita}}, \bibinfo {author} {\bibfnamefont {A.~W.}\ \bibnamefont {Cross}},
  \bibinfo {author} {\bibfnamefont {A.~D.}\ \bibnamefont {C\'orcoles}},
  \bibinfo {author} {\bibfnamefont {J.~M.}\ \bibnamefont {Chow}},\ and\
  \bibinfo {author} {\bibfnamefont {J.~M.}\ \bibnamefont {Gambetta}},\
  }\bibfield  {title} {\bibinfo {title} {Experimental demonstration of
  fault-tolerant state preparation with superconducting qubits},\ }\href
  {https://doi.org/10.1103/PhysRevLett.119.180501} {\bibfield  {journal}
  {\bibinfo  {journal} {Phys. Rev. Lett.}\ }\textbf {\bibinfo {volume} {119}},\
  \bibinfo {pages} {180501} (\bibinfo {year} {2017})}\BibitemShut {NoStop}%
\bibitem [{\citenamefont {Chen}\ \emph {et~al.}(2021)\citenamefont {Chen},
  \citenamefont {Satzinger}, \citenamefont {Atalaya}, \citenamefont {Korotkov},
  \citenamefont {Dunsworth}, \citenamefont {Sank}, \citenamefont {Quintana},
  \citenamefont {McEwen}, \citenamefont {Barends}, \citenamefont {Klimov},
  \citenamefont {Hong}, \citenamefont {Jones}, \citenamefont {Petukhov},
  \citenamefont {Kafri}, \citenamefont {Demura}, \citenamefont {Burkett},
  \citenamefont {Gidney}, \citenamefont {Fowler}, \citenamefont {Paler},
  \citenamefont {Putterman}, \citenamefont {Aleiner}, \citenamefont {Arute},
  \citenamefont {Arya}, \citenamefont {Babbush}, \citenamefont {Bardin},
  \citenamefont {Bengtsson}, \citenamefont {Bourassa}, \citenamefont
  {Broughton}, \citenamefont {Buckley}, \citenamefont {Buell}, \citenamefont
  {Bushnell}, \citenamefont {Chiaro}, \citenamefont {Collins}, \citenamefont
  {Courtney}, \citenamefont {Derk}, \citenamefont {Eppens}, \citenamefont
  {Erickson}, \citenamefont {Farhi}, \citenamefont {Foxen}, \citenamefont
  {Giustina}, \citenamefont {Greene}, \citenamefont {Gross}, \citenamefont
  {Harrigan}, \citenamefont {Harrington}, \citenamefont {Hilton}, \citenamefont
  {Ho}, \citenamefont {Huang}, \citenamefont {Huggins}, \citenamefont {Ioffe},
  \citenamefont {Isakov}, \citenamefont {Jeffrey}, \citenamefont {Jiang},
  \citenamefont {Kechedzhi}, \citenamefont {Kim}, \citenamefont {Kitaev},
  \citenamefont {Kostritsa}, \citenamefont {Landhuis}, \citenamefont {Laptev},
  \citenamefont {Lucero}, \citenamefont {Martin}, \citenamefont {McClean},
  \citenamefont {McCourt}, \citenamefont {Mi}, \citenamefont {Miao},
  \citenamefont {Mohseni}, \citenamefont {Montazeri}, \citenamefont
  {Mruczkiewicz}, \citenamefont {Mutus}, \citenamefont {Naaman}, \citenamefont
  {Neeley}, \citenamefont {Neill}, \citenamefont {Newman}, \citenamefont {Niu},
  \citenamefont {O’Brien}, \citenamefont {Opremcak}, \citenamefont {Ostby},
  \citenamefont {Pató}, \citenamefont {Redd}, \citenamefont {Roushan},
  \citenamefont {Rubin}, \citenamefont {Shvarts}, \citenamefont {Strain},
  \citenamefont {Szalay}, \citenamefont {Trevithick}, \citenamefont
  {Villalonga}, \citenamefont {White}, \citenamefont {Yao}, \citenamefont
  {Yeh}, \citenamefont {Yoo}, \citenamefont {Zalcman}, \citenamefont {Neven},
  \citenamefont {Boixo}, \citenamefont {Smelyanskiy}, \citenamefont {Chen},
  \citenamefont {Megrant},\ and\ \citenamefont {Kelly}}]{Chen2021}%
  \BibitemOpen
  \bibfield  {author} {\bibinfo {author} {\bibfnamefont {Z.}~\bibnamefont
  {Chen}}, \bibinfo {author} {\bibfnamefont {K.~J.}\ \bibnamefont {Satzinger}},
  \bibinfo {author} {\bibfnamefont {J.}~\bibnamefont {Atalaya}}, \bibinfo
  {author} {\bibfnamefont {A.~N.}\ \bibnamefont {Korotkov}}, \bibinfo {author}
  {\bibfnamefont {A.}~\bibnamefont {Dunsworth}}, \bibinfo {author}
  {\bibfnamefont {D.}~\bibnamefont {Sank}}, \bibinfo {author} {\bibfnamefont
  {C.}~\bibnamefont {Quintana}}, \bibinfo {author} {\bibfnamefont
  {M.}~\bibnamefont {McEwen}}, \bibinfo {author} {\bibfnamefont
  {R.}~\bibnamefont {Barends}}, \bibinfo {author} {\bibfnamefont {P.~V.}\
  \bibnamefont {Klimov}}, \bibinfo {author} {\bibfnamefont {S.}~\bibnamefont
  {Hong}}, \bibinfo {author} {\bibfnamefont {C.}~\bibnamefont {Jones}},
  \bibinfo {author} {\bibfnamefont {A.}~\bibnamefont {Petukhov}}, \bibinfo
  {author} {\bibfnamefont {D.}~\bibnamefont {Kafri}}, \bibinfo {author}
  {\bibfnamefont {S.}~\bibnamefont {Demura}}, \bibinfo {author} {\bibfnamefont
  {B.}~\bibnamefont {Burkett}}, \bibinfo {author} {\bibfnamefont
  {C.}~\bibnamefont {Gidney}}, \bibinfo {author} {\bibfnamefont {A.~G.}\
  \bibnamefont {Fowler}}, \bibinfo {author} {\bibfnamefont {A.}~\bibnamefont
  {Paler}}, \bibinfo {author} {\bibfnamefont {H.}~\bibnamefont {Putterman}},
  \bibinfo {author} {\bibfnamefont {I.}~\bibnamefont {Aleiner}}, \bibinfo
  {author} {\bibfnamefont {F.}~\bibnamefont {Arute}}, \bibinfo {author}
  {\bibfnamefont {K.}~\bibnamefont {Arya}}, \bibinfo {author} {\bibfnamefont
  {R.}~\bibnamefont {Babbush}}, \bibinfo {author} {\bibfnamefont {J.~C.}\
  \bibnamefont {Bardin}}, \bibinfo {author} {\bibfnamefont {A.}~\bibnamefont
  {Bengtsson}}, \bibinfo {author} {\bibfnamefont {A.}~\bibnamefont {Bourassa}},
  \bibinfo {author} {\bibfnamefont {M.}~\bibnamefont {Broughton}}, \bibinfo
  {author} {\bibfnamefont {B.~B.}\ \bibnamefont {Buckley}}, \bibinfo {author}
  {\bibfnamefont {D.~A.}\ \bibnamefont {Buell}}, \bibinfo {author}
  {\bibfnamefont {N.}~\bibnamefont {Bushnell}}, \bibinfo {author}
  {\bibfnamefont {B.}~\bibnamefont {Chiaro}}, \bibinfo {author} {\bibfnamefont
  {R.}~\bibnamefont {Collins}}, \bibinfo {author} {\bibfnamefont
  {W.}~\bibnamefont {Courtney}}, \bibinfo {author} {\bibfnamefont {A.~R.}\
  \bibnamefont {Derk}}, \bibinfo {author} {\bibfnamefont {D.}~\bibnamefont
  {Eppens}}, \bibinfo {author} {\bibfnamefont {C.}~\bibnamefont {Erickson}},
  \bibinfo {author} {\bibfnamefont {E.}~\bibnamefont {Farhi}}, \bibinfo
  {author} {\bibfnamefont {B.}~\bibnamefont {Foxen}}, \bibinfo {author}
  {\bibfnamefont {M.}~\bibnamefont {Giustina}}, \bibinfo {author}
  {\bibfnamefont {A.}~\bibnamefont {Greene}}, \bibinfo {author} {\bibfnamefont
  {J.~A.}\ \bibnamefont {Gross}}, \bibinfo {author} {\bibfnamefont {M.~P.}\
  \bibnamefont {Harrigan}}, \bibinfo {author} {\bibfnamefont {S.~D.}\
  \bibnamefont {Harrington}}, \bibinfo {author} {\bibfnamefont
  {J.}~\bibnamefont {Hilton}}, \bibinfo {author} {\bibfnamefont
  {A.}~\bibnamefont {Ho}}, \bibinfo {author} {\bibfnamefont {T.}~\bibnamefont
  {Huang}}, \bibinfo {author} {\bibfnamefont {W.~J.}\ \bibnamefont {Huggins}},
  \bibinfo {author} {\bibfnamefont {L.~B.}\ \bibnamefont {Ioffe}}, \bibinfo
  {author} {\bibfnamefont {S.~V.}\ \bibnamefont {Isakov}}, \bibinfo {author}
  {\bibfnamefont {E.}~\bibnamefont {Jeffrey}}, \bibinfo {author} {\bibfnamefont
  {Z.}~\bibnamefont {Jiang}}, \bibinfo {author} {\bibfnamefont
  {K.}~\bibnamefont {Kechedzhi}}, \bibinfo {author} {\bibfnamefont
  {S.}~\bibnamefont {Kim}}, \bibinfo {author} {\bibfnamefont {A.}~\bibnamefont
  {Kitaev}}, \bibinfo {author} {\bibfnamefont {F.}~\bibnamefont {Kostritsa}},
  \bibinfo {author} {\bibfnamefont {D.}~\bibnamefont {Landhuis}}, \bibinfo
  {author} {\bibfnamefont {P.}~\bibnamefont {Laptev}}, \bibinfo {author}
  {\bibfnamefont {E.}~\bibnamefont {Lucero}}, \bibinfo {author} {\bibfnamefont
  {O.}~\bibnamefont {Martin}}, \bibinfo {author} {\bibfnamefont {J.~R.}\
  \bibnamefont {McClean}}, \bibinfo {author} {\bibfnamefont {T.}~\bibnamefont
  {McCourt}}, \bibinfo {author} {\bibfnamefont {X.}~\bibnamefont {Mi}},
  \bibinfo {author} {\bibfnamefont {K.~C.}\ \bibnamefont {Miao}}, \bibinfo
  {author} {\bibfnamefont {M.}~\bibnamefont {Mohseni}}, \bibinfo {author}
  {\bibfnamefont {S.}~\bibnamefont {Montazeri}}, \bibinfo {author}
  {\bibfnamefont {W.}~\bibnamefont {Mruczkiewicz}}, \bibinfo {author}
  {\bibfnamefont {J.}~\bibnamefont {Mutus}}, \bibinfo {author} {\bibfnamefont
  {O.}~\bibnamefont {Naaman}}, \bibinfo {author} {\bibfnamefont
  {M.}~\bibnamefont {Neeley}}, \bibinfo {author} {\bibfnamefont
  {C.}~\bibnamefont {Neill}}, \bibinfo {author} {\bibfnamefont
  {M.}~\bibnamefont {Newman}}, \bibinfo {author} {\bibfnamefont {M.~Y.}\
  \bibnamefont {Niu}}, \bibinfo {author} {\bibfnamefont {T.~E.}\ \bibnamefont
  {O’Brien}}, \bibinfo {author} {\bibfnamefont {A.}~\bibnamefont {Opremcak}},
  \bibinfo {author} {\bibfnamefont {E.}~\bibnamefont {Ostby}}, \bibinfo
  {author} {\bibfnamefont {B.}~\bibnamefont {Pató}}, \bibinfo {author}
  {\bibfnamefont {N.}~\bibnamefont {Redd}}, \bibinfo {author} {\bibfnamefont
  {P.}~\bibnamefont {Roushan}}, \bibinfo {author} {\bibfnamefont {N.~C.}\
  \bibnamefont {Rubin}}, \bibinfo {author} {\bibfnamefont {V.}~\bibnamefont
  {Shvarts}}, \bibinfo {author} {\bibfnamefont {D.}~\bibnamefont {Strain}},
  \bibinfo {author} {\bibfnamefont {M.}~\bibnamefont {Szalay}}, \bibinfo
  {author} {\bibfnamefont {M.~D.}\ \bibnamefont {Trevithick}}, \bibinfo
  {author} {\bibfnamefont {B.}~\bibnamefont {Villalonga}}, \bibinfo {author}
  {\bibfnamefont {T.}~\bibnamefont {White}}, \bibinfo {author} {\bibfnamefont
  {Z.~J.}\ \bibnamefont {Yao}}, \bibinfo {author} {\bibfnamefont
  {P.}~\bibnamefont {Yeh}}, \bibinfo {author} {\bibfnamefont {J.}~\bibnamefont
  {Yoo}}, \bibinfo {author} {\bibfnamefont {A.}~\bibnamefont {Zalcman}},
  \bibinfo {author} {\bibfnamefont {H.}~\bibnamefont {Neven}}, \bibinfo
  {author} {\bibfnamefont {S.}~\bibnamefont {Boixo}}, \bibinfo {author}
  {\bibfnamefont {V.}~\bibnamefont {Smelyanskiy}}, \bibinfo {author}
  {\bibfnamefont {Y.}~\bibnamefont {Chen}}, \bibinfo {author} {\bibfnamefont
  {A.}~\bibnamefont {Megrant}},\ and\ \bibinfo {author} {\bibfnamefont
  {J.}~\bibnamefont {Kelly}},\ }\bibfield  {title} {\bibinfo {title}
  {Exponential suppression of bit or phase errors with cyclic error
  correction},\ }\href {https://doi.org/10.1038/s41586-021-03588-y} {\bibfield
  {journal} {\bibinfo  {journal} {Nature 2021 595:7867}\ }\textbf {\bibinfo
  {volume} {595}},\ \bibinfo {pages} {383} (\bibinfo {year}
  {2021})}\BibitemShut {NoStop}%
\bibitem [{\citenamefont {Andersen}\ \emph {et~al.}(2020)\citenamefont
  {Andersen}, \citenamefont {Remm}, \citenamefont {Lazar}, \citenamefont
  {Krinner}, \citenamefont {Lacroix}, \citenamefont {Norris}, \citenamefont
  {Gabureac}, \citenamefont {Eichler},\ and\ \citenamefont
  {Wallraff}}]{Andersen2020}%
  \BibitemOpen
  \bibfield  {author} {\bibinfo {author} {\bibfnamefont {C.~K.}\ \bibnamefont
  {Andersen}}, \bibinfo {author} {\bibfnamefont {A.}~\bibnamefont {Remm}},
  \bibinfo {author} {\bibfnamefont {S.}~\bibnamefont {Lazar}}, \bibinfo
  {author} {\bibfnamefont {S.}~\bibnamefont {Krinner}}, \bibinfo {author}
  {\bibfnamefont {N.}~\bibnamefont {Lacroix}}, \bibinfo {author} {\bibfnamefont
  {G.~J.}\ \bibnamefont {Norris}}, \bibinfo {author} {\bibfnamefont
  {M.}~\bibnamefont {Gabureac}}, \bibinfo {author} {\bibfnamefont
  {C.}~\bibnamefont {Eichler}},\ and\ \bibinfo {author} {\bibfnamefont
  {A.}~\bibnamefont {Wallraff}},\ }\bibfield  {title} {\bibinfo {title}
  {Repeated quantum error detection in a surface code},\ }\href
  {https://doi.org/10.1038/s41567-020-0920-y} {\bibfield  {journal} {\bibinfo
  {journal} {Nature Physics 2020 16:8}\ }\textbf {\bibinfo {volume} {16}},\
  \bibinfo {pages} {875} (\bibinfo {year} {2020})}\BibitemShut {NoStop}%
\bibitem [{\citenamefont {Bombín}(2016)}]{Bombin2016}%
  \BibitemOpen
  \bibfield  {author} {\bibinfo {author} {\bibfnamefont {H.}~\bibnamefont
  {Bombín}},\ }\bibfield  {title} {\bibinfo {title} {Resilience to
  time-correlated noise in quantum computation},\ }\href
  {https://doi.org/10.1103/PhysRevX.6.041034} {\bibfield  {journal} {\bibinfo
  {journal} {Physical Review X}\ }\textbf {\bibinfo {volume} {6}},\ \bibinfo
  {pages} {041034} (\bibinfo {year} {2016})}\BibitemShut {NoStop}%
\bibitem [{\citenamefont {Stace}\ \emph {et~al.}(2009)\citenamefont {Stace},
  \citenamefont {Barrett},\ and\ \citenamefont {Doherty}}]{Stace2009}%
  \BibitemOpen
  \bibfield  {author} {\bibinfo {author} {\bibfnamefont {T.~M.}\ \bibnamefont
  {Stace}}, \bibinfo {author} {\bibfnamefont {S.~D.}\ \bibnamefont {Barrett}},\
  and\ \bibinfo {author} {\bibfnamefont {A.~C.}\ \bibnamefont {Doherty}},\
  }\bibfield  {title} {\bibinfo {title} {Thresholds for topological codes in
  the presence of loss},\ }\href
  {https://doi.org/10.1103/PhysRevLett.102.200501} {\bibfield  {journal}
  {\bibinfo  {journal} {Physical Review Letters}\ }\textbf {\bibinfo {volume}
  {102}},\ \bibinfo {pages} {200501} (\bibinfo {year} {2009})}\BibitemShut
  {NoStop}%
\bibitem [{\citenamefont {Auger}\ \emph {et~al.}(2017)\citenamefont {Auger},
  \citenamefont {Anwar}, \citenamefont {Gimeno-Segovia}, \citenamefont
  {Stace},\ and\ \citenamefont {Browne}}]{Auger2017}%
  \BibitemOpen
  \bibfield  {author} {\bibinfo {author} {\bibfnamefont {J.~M.}\ \bibnamefont
  {Auger}}, \bibinfo {author} {\bibfnamefont {H.}~\bibnamefont {Anwar}},
  \bibinfo {author} {\bibfnamefont {M.}~\bibnamefont {Gimeno-Segovia}},
  \bibinfo {author} {\bibfnamefont {T.~M.}\ \bibnamefont {Stace}},\ and\
  \bibinfo {author} {\bibfnamefont {D.~E.}\ \bibnamefont {Browne}},\ }\bibfield
   {title} {\bibinfo {title} {Fault-tolerance thresholds for the surface code
  with fabrication errors},\ }\href
  {https://doi.org/10.1103/PhysRevA.96.042316} {\bibfield  {journal} {\bibinfo
  {journal} {Physical Review A}\ }\textbf {\bibinfo {volume} {96}},\ \bibinfo
  {pages} {042316} (\bibinfo {year} {2017})}\BibitemShut {NoStop}%
\bibitem [{\citenamefont {Nagayama}\ \emph {et~al.}(2017)\citenamefont
  {Nagayama}, \citenamefont {Fowler}, \citenamefont {Horsman}, \citenamefont
  {Devitt},\ and\ \citenamefont {Meter}}]{Nagayama2017}%
  \BibitemOpen
  \bibfield  {author} {\bibinfo {author} {\bibfnamefont {S.}~\bibnamefont
  {Nagayama}}, \bibinfo {author} {\bibfnamefont {A.~G.}\ \bibnamefont
  {Fowler}}, \bibinfo {author} {\bibfnamefont {D.}~\bibnamefont {Horsman}},
  \bibinfo {author} {\bibfnamefont {S.~J.}\ \bibnamefont {Devitt}},\ and\
  \bibinfo {author} {\bibfnamefont {R.~V.}\ \bibnamefont {Meter}},\ }\bibfield
  {title} {\bibinfo {title} {Surface code error correction on a defective
  lattice},\ }\href {https://doi.org/10.1088/1367-2630/AA5918} {\bibfield
  {journal} {\bibinfo  {journal} {New Journal of Physics}\ }\textbf {\bibinfo
  {volume} {19}},\ \bibinfo {pages} {023050} (\bibinfo {year}
  {2017})}\BibitemShut {NoStop}%
\bibitem [{\citenamefont {Zhu}\ \emph {et~al.}(2021)\citenamefont {Zhu},
  \citenamefont {Jochym-O'Connor},\ and\ \citenamefont {Dua}}]{zhu2021t}%
  \BibitemOpen
  \bibfield  {author} {\bibinfo {author} {\bibfnamefont {G.}~\bibnamefont
  {Zhu}}, \bibinfo {author} {\bibfnamefont {T.}~\bibnamefont
  {Jochym-O'Connor}},\ and\ \bibinfo {author} {\bibfnamefont {A.}~\bibnamefont
  {Dua}},\ }\href@noop {} {\bibinfo {title} {Topological order, quantum codes
  and quantum computation on fractal geometries}} (\bibinfo {year} {2021}),\
  \Eprint {https://arxiv.org/abs/2108.00018} {arXiv:2108.00018 [quant-ph]}
  \BibitemShut {NoStop}%
\bibitem [{\citenamefont {Bravyi}\ and\ \citenamefont
  {Kitaev}(1998)}]{bravyi1998}%
  \BibitemOpen
  \bibfield  {author} {\bibinfo {author} {\bibfnamefont {S.~B.}\ \bibnamefont
  {Bravyi}}\ and\ \bibinfo {author} {\bibfnamefont {A.~Y.}\ \bibnamefont
  {Kitaev}},\ }\href@noop {} {\bibinfo {title} {Quantum codes on a lattice with
  boundary}} (\bibinfo {year} {1998}),\ \Eprint
  {https://arxiv.org/abs/quant-ph/9811052} {arXiv:quant-ph/9811052 [quant-ph]}
  \BibitemShut {NoStop}%
\bibitem [{\citenamefont {Raussendorf}\ \emph {et~al.}(2005)\citenamefont
  {Raussendorf}, \citenamefont {Bravyi},\ and\ \citenamefont
  {Harrington}}]{raussendorf05}%
  \BibitemOpen
  \bibfield  {author} {\bibinfo {author} {\bibfnamefont {R.}~\bibnamefont
  {Raussendorf}}, \bibinfo {author} {\bibfnamefont {S.}~\bibnamefont
  {Bravyi}},\ and\ \bibinfo {author} {\bibfnamefont {J.}~\bibnamefont
  {Harrington}},\ }\bibfield  {title} {\bibinfo {title} {Long-range quantum
  entanglement in noisy cluster states},\ }\href
  {https://doi.org/10.1103/PhysRevA.71.062313} {\bibfield  {journal} {\bibinfo
  {journal} {Phys. Rev. A}\ }\textbf {\bibinfo {volume} {71}},\ \bibinfo
  {pages} {062313} (\bibinfo {year} {2005})}\BibitemShut {NoStop}%
\bibitem [{\citenamefont {Higgott}\ and\ \citenamefont
  {Breuckmann}(2021)}]{Higgott2021}%
  \BibitemOpen
  \bibfield  {author} {\bibinfo {author} {\bibfnamefont {O.}~\bibnamefont
  {Higgott}}\ and\ \bibinfo {author} {\bibfnamefont {N.~P.}\ \bibnamefont
  {Breuckmann}},\ }\bibfield  {title} {\bibinfo {title} {Subsystem codes with
  high thresholds by gauge fixing and reduced qubit overhead},\ }\href
  {https://doi.org/10.1103/PhysRevX.11.031039} {\bibfield  {journal} {\bibinfo
  {journal} {Phys. Rev. X}\ }\textbf {\bibinfo {volume} {11}},\ \bibinfo
  {pages} {031039} (\bibinfo {year} {2021})}\BibitemShut {NoStop}%
\bibitem [{\citenamefont {Bravyi}\ and\ \citenamefont
  {Haah}(2011)}]{Bravyi2011}%
  \BibitemOpen
  \bibfield  {author} {\bibinfo {author} {\bibfnamefont {S.}~\bibnamefont
  {Bravyi}}\ and\ \bibinfo {author} {\bibfnamefont {J.}~\bibnamefont {Haah}},\
  }\href {http://arxiv.org/abs/1112.3252} {\bibinfo {title} {Analytic and
  numerical demonstration of quantum self-correction in the 3d cubic code}}
  (\bibinfo {year} {2011}),\ \Eprint {https://arxiv.org/abs/1112.3252}
  {arXiv:1112.3252} \BibitemShut {NoStop}%
\bibitem [{\citenamefont {Tomita}\ and\ \citenamefont
  {Svore}(2014)}]{Tomita2014}%
  \BibitemOpen
  \bibfield  {author} {\bibinfo {author} {\bibfnamefont {Y.}~\bibnamefont
  {Tomita}}\ and\ \bibinfo {author} {\bibfnamefont {K.~M.}\ \bibnamefont
  {Svore}},\ }\bibfield  {title} {\bibinfo {title} {Low-distance surface codes
  under realistic quantum noise},\ }\href
  {https://doi.org/10.1103/PhysRevA.90.062320} {\bibfield  {journal} {\bibinfo
  {journal} {Phys. Rev. A}\ }\textbf {\bibinfo {volume} {90}},\ \bibinfo
  {pages} {062320} (\bibinfo {year} {2014})}\BibitemShut {NoStop}%
\bibitem [{\citenamefont {Raussendorf}\ \emph {et~al.}(2006)\citenamefont
  {Raussendorf}, \citenamefont {Harrington},\ and\ \citenamefont
  {Goyal}}]{Raussendorf2006}%
  \BibitemOpen
  \bibfield  {author} {\bibinfo {author} {\bibfnamefont {R.}~\bibnamefont
  {Raussendorf}}, \bibinfo {author} {\bibfnamefont {J.}~\bibnamefont
  {Harrington}},\ and\ \bibinfo {author} {\bibfnamefont {K.}~\bibnamefont
  {Goyal}},\ }\bibfield  {title} {\bibinfo {title} {A fault-tolerant one-way
  quantum computer},\ }\href
  {https://doi.org/https://doi.org/10.1016/j.aop.2006.01.012} {\bibfield
  {journal} {\bibinfo  {journal} {Annals of Physics}\ }\textbf {\bibinfo
  {volume} {321}},\ \bibinfo {pages} {2242} (\bibinfo {year}
  {2006})}\BibitemShut {NoStop}%
\bibitem [{\citenamefont {Bombin}\ and\ \citenamefont
  {Martin-Delgado}(2009)}]{Bombin2009}%
  \BibitemOpen
  \bibfield  {author} {\bibinfo {author} {\bibfnamefont {H.}~\bibnamefont
  {Bombin}}\ and\ \bibinfo {author} {\bibfnamefont {M.~A.}\ \bibnamefont
  {Martin-Delgado}},\ }\bibfield  {title} {\bibinfo {title} {Quantum
  measurements and gates by code deformation},\ }\href
  {https://doi.org/10.1088/1751-8113/42/9/095302} {\bibfield  {journal}
  {\bibinfo  {journal} {Journal of Physics A: Mathematical and Theoretical}\
  }\textbf {\bibinfo {volume} {42}},\ \bibinfo {pages} {095302} (\bibinfo
  {year} {2009})}\BibitemShut {NoStop}%
\bibitem [{\citenamefont {Horsman}\ \emph {et~al.}(2012)\citenamefont
  {Horsman}, \citenamefont {Fowler}, \citenamefont {Devitt},\ and\
  \citenamefont {Meter}}]{Horsman2012}%
  \BibitemOpen
  \bibfield  {author} {\bibinfo {author} {\bibfnamefont {C.}~\bibnamefont
  {Horsman}}, \bibinfo {author} {\bibfnamefont {A.~G.}\ \bibnamefont {Fowler}},
  \bibinfo {author} {\bibfnamefont {S.}~\bibnamefont {Devitt}},\ and\ \bibinfo
  {author} {\bibfnamefont {R.~V.}\ \bibnamefont {Meter}},\ }\bibfield  {title}
  {\bibinfo {title} {Surface code quantum computing by lattice surgery},\
  }\href {https://doi.org/10.1088/1367-2630/14/12/123011} {\bibfield  {journal}
  {\bibinfo  {journal} {New Journal of Physics}\ }\textbf {\bibinfo {volume}
  {14}},\ \bibinfo {pages} {123011} (\bibinfo {year} {2012})}\BibitemShut
  {NoStop}%
\bibitem [{\citenamefont {Brown}\ \emph {et~al.}(2017)\citenamefont {Brown},
  \citenamefont {Laubscher}, \citenamefont {Kesselring},\ and\ \citenamefont
  {Wootton}}]{Brown2017}%
  \BibitemOpen
  \bibfield  {author} {\bibinfo {author} {\bibfnamefont {B.~J.}\ \bibnamefont
  {Brown}}, \bibinfo {author} {\bibfnamefont {K.}~\bibnamefont {Laubscher}},
  \bibinfo {author} {\bibfnamefont {M.~S.}\ \bibnamefont {Kesselring}},\ and\
  \bibinfo {author} {\bibfnamefont {J.~R.}\ \bibnamefont {Wootton}},\
  }\bibfield  {title} {\bibinfo {title} {Poking holes and cutting corners to
  achieve clifford gates with the surface code},\ }\href
  {https://doi.org/10.1103/PhysRevX.7.021029} {\bibfield  {journal} {\bibinfo
  {journal} {Phys. Rev. X}\ }\textbf {\bibinfo {volume} {7}},\ \bibinfo {pages}
  {021029} (\bibinfo {year} {2017})}\BibitemShut {NoStop}%
\bibitem [{\citenamefont {Litinski}(2019)}]{Litinski2019}%
  \BibitemOpen
  \bibfield  {author} {\bibinfo {author} {\bibfnamefont {D.}~\bibnamefont
  {Litinski}},\ }\bibfield  {title} {\bibinfo {title} {Magic state
  distillation: Not as costly as you think},\ }\href
  {https://doi.org/10.22331/q-2019-12-02-205} {\bibfield  {journal} {\bibinfo
  {journal} {Quantum}\ }\textbf {\bibinfo {volume} {3}},\ \bibinfo {pages}
  {205} (\bibinfo {year} {2019})}\BibitemShut {NoStop}%
\bibitem [{\citenamefont {Bravyi}\ and\ \citenamefont
  {Kitaev}(2005)}]{Bravyi05}%
  \BibitemOpen
  \bibfield  {author} {\bibinfo {author} {\bibfnamefont {S.}~\bibnamefont
  {Bravyi}}\ and\ \bibinfo {author} {\bibfnamefont {A.}~\bibnamefont
  {Kitaev}},\ }\bibfield  {title} {\bibinfo {title} {Universal quantum
  computation with ideal clifford gates and noisy ancillas},\ }\href
  {https://doi.org/10.1103/PhysRevA.71.022316} {\bibfield  {journal} {\bibinfo
  {journal} {Phys. Rev. A}\ }\textbf {\bibinfo {volume} {71}},\ \bibinfo
  {pages} {022316} (\bibinfo {year} {2005})}\BibitemShut {NoStop}%
\bibitem [{\citenamefont {Li}(2015)}]{Li2015}%
  \BibitemOpen
  \bibfield  {author} {\bibinfo {author} {\bibfnamefont {Y.}~\bibnamefont
  {Li}},\ }\bibfield  {title} {\bibinfo {title} {A magic state's fidelity can
  be superior to the operations that created it},\ }\href
  {https://doi.org/10.1088/1367-2630/17/2/023037} {\bibfield  {journal}
  {\bibinfo  {journal} {New Journal of Physics}\ }\textbf {\bibinfo {volume}
  {17}},\ \bibinfo {pages} {023037} (\bibinfo {year} {2015})}\BibitemShut
  {NoStop}%
\bibitem [{\citenamefont {Łodyga}\ \emph {et~al.}(2015)\citenamefont
  {Łodyga}, \citenamefont {Mazurek}, \citenamefont {Grudka},\ and\
  \citenamefont {Horodecki}}]{Lodyga2015}%
  \BibitemOpen
  \bibfield  {author} {\bibinfo {author} {\bibfnamefont {J.}~\bibnamefont
  {Łodyga}}, \bibinfo {author} {\bibfnamefont {P.}~\bibnamefont {Mazurek}},
  \bibinfo {author} {\bibfnamefont {A.}~\bibnamefont {Grudka}},\ and\ \bibinfo
  {author} {\bibfnamefont {M.}~\bibnamefont {Horodecki}},\ }\bibfield  {title}
  {\bibinfo {title} {Simple scheme for encoding and decoding a qubit in unknown
  state for various topological codes},\ }\href
  {https://doi.org/10.1038/srep08975} {\bibfield  {journal} {\bibinfo
  {journal} {Scientific Reports 2015 5:1}\ }\textbf {\bibinfo {volume} {5}},\
  \bibinfo {pages} {1} (\bibinfo {year} {2015})}\BibitemShut {NoStop}%
\bibitem [{\citenamefont {Huo}\ and\ \citenamefont {Li}(2017)}]{Huo2017}%
  \BibitemOpen
  \bibfield  {author} {\bibinfo {author} {\bibfnamefont {M.-X.}\ \bibnamefont
  {Huo}}\ and\ \bibinfo {author} {\bibfnamefont {Y.}~\bibnamefont {Li}},\
  }\bibfield  {title} {\bibinfo {title} {Learning time-dependent noise to
  reduce logical errors: real time error rate estimation in quantum error
  correction},\ }\href {https://doi.org/10.1088/1367-2630/aa916e} {\bibfield
  {journal} {\bibinfo  {journal} {New Journal of Physics}\ }\textbf {\bibinfo
  {volume} {19}},\ \bibinfo {pages} {123032} (\bibinfo {year}
  {2017})}\BibitemShut {NoStop}%
\bibitem [{\citenamefont {Spitz}\ \emph {et~al.}(2018)\citenamefont {Spitz},
  \citenamefont {Tarasinski}, \citenamefont {Beenakker},\ and\ \citenamefont
  {O'Brien}}]{Spitz2018}%
  \BibitemOpen
  \bibfield  {author} {\bibinfo {author} {\bibfnamefont {S.~T.}\ \bibnamefont
  {Spitz}}, \bibinfo {author} {\bibfnamefont {B.}~\bibnamefont {Tarasinski}},
  \bibinfo {author} {\bibfnamefont {C.~W.~J.}\ \bibnamefont {Beenakker}},\ and\
  \bibinfo {author} {\bibfnamefont {T.~E.}\ \bibnamefont {O'Brien}},\
  }\bibfield  {title} {\bibinfo {title} {Adaptive weight estimator for quantum
  error correction in a time-dependent environment},\ }\href
  {https://doi.org/10.1002/QUTE.201800012} {\bibfield  {journal} {\bibinfo
  {journal} {Advanced Quantum Technologies}\ }\textbf {\bibinfo {volume} {1}},\
  \bibinfo {pages} {1800012} (\bibinfo {year} {2018})}\BibitemShut {NoStop}%
\bibitem [{\citenamefont {Nickerson}\ and\ \citenamefont
  {Brown}(2019)}]{Nickerson2019analysingcorrelated}%
  \BibitemOpen
  \bibfield  {author} {\bibinfo {author} {\bibfnamefont {N.~H.}\ \bibnamefont
  {Nickerson}}\ and\ \bibinfo {author} {\bibfnamefont {B.~J.}\ \bibnamefont
  {Brown}},\ }\bibfield  {title} {\bibinfo {title} {Analysing correlated noise
  on the surface code using adaptive decoding algorithms},\ }\href
  {https://doi.org/10.22331/q-2019-04-08-131} {\bibfield  {journal} {\bibinfo
  {journal} {{Quantum}}\ }\textbf {\bibinfo {volume} {3}},\ \bibinfo {pages}
  {131} (\bibinfo {year} {2019})}\BibitemShut {NoStop}%
\bibitem [{\citenamefont {Kelly}\ \emph {et~al.}(2016)\citenamefont {Kelly},
  \citenamefont {Barends}, \citenamefont {Fowler}, \citenamefont {Megrant},
  \citenamefont {Jeffrey}, \citenamefont {White}, \citenamefont {Sank},
  \citenamefont {Mutus}, \citenamefont {Campbell}, \citenamefont {Chen},
  \citenamefont {Chen}, \citenamefont {Chiaro}, \citenamefont {Dunsworth},
  \citenamefont {Lucero}, \citenamefont {Neeley}, \citenamefont {Neill},
  \citenamefont {O'Malley}, \citenamefont {Quintana}, \citenamefont {Roushan},
  \citenamefont {Vainsencher}, \citenamefont {Wenner},\ and\ \citenamefont
  {Martinis}}]{Kelly2016}%
  \BibitemOpen
  \bibfield  {author} {\bibinfo {author} {\bibfnamefont {J.}~\bibnamefont
  {Kelly}}, \bibinfo {author} {\bibfnamefont {R.}~\bibnamefont {Barends}},
  \bibinfo {author} {\bibfnamefont {A.~G.}\ \bibnamefont {Fowler}}, \bibinfo
  {author} {\bibfnamefont {A.}~\bibnamefont {Megrant}}, \bibinfo {author}
  {\bibfnamefont {E.}~\bibnamefont {Jeffrey}}, \bibinfo {author} {\bibfnamefont
  {T.~C.}\ \bibnamefont {White}}, \bibinfo {author} {\bibfnamefont
  {D.}~\bibnamefont {Sank}}, \bibinfo {author} {\bibfnamefont {J.~Y.}\
  \bibnamefont {Mutus}}, \bibinfo {author} {\bibfnamefont {B.}~\bibnamefont
  {Campbell}}, \bibinfo {author} {\bibfnamefont {Y.}~\bibnamefont {Chen}},
  \bibinfo {author} {\bibfnamefont {Z.}~\bibnamefont {Chen}}, \bibinfo {author}
  {\bibfnamefont {B.}~\bibnamefont {Chiaro}}, \bibinfo {author} {\bibfnamefont
  {A.}~\bibnamefont {Dunsworth}}, \bibinfo {author} {\bibfnamefont
  {E.}~\bibnamefont {Lucero}}, \bibinfo {author} {\bibfnamefont
  {M.}~\bibnamefont {Neeley}}, \bibinfo {author} {\bibfnamefont
  {C.}~\bibnamefont {Neill}}, \bibinfo {author} {\bibfnamefont {P.~J.~J.}\
  \bibnamefont {O'Malley}}, \bibinfo {author} {\bibfnamefont {C.}~\bibnamefont
  {Quintana}}, \bibinfo {author} {\bibfnamefont {P.}~\bibnamefont {Roushan}},
  \bibinfo {author} {\bibfnamefont {A.}~\bibnamefont {Vainsencher}}, \bibinfo
  {author} {\bibfnamefont {J.}~\bibnamefont {Wenner}},\ and\ \bibinfo {author}
  {\bibfnamefont {J.~M.}\ \bibnamefont {Martinis}},\ }\bibfield  {title}
  {\bibinfo {title} {Scalable in situ qubit calibration during repetitive error
  detection},\ }\href {https://doi.org/10.1103/PhysRevA.94.032321} {\bibfield
  {journal} {\bibinfo  {journal} {Phys. Rev. A}\ }\textbf {\bibinfo {volume}
  {94}},\ \bibinfo {pages} {032321} (\bibinfo {year} {2016})}\BibitemShut
  {NoStop}%
\bibitem [{\citenamefont {Waintal}(2019)}]{Waintal19}%
  \BibitemOpen
  \bibfield  {author} {\bibinfo {author} {\bibfnamefont {X.}~\bibnamefont
  {Waintal}},\ }\bibfield  {title} {\bibinfo {title} {What determines the
  ultimate precision of a quantum computer},\ }\href
  {https://doi.org/10.1103/PhysRevA.99.042318} {\bibfield  {journal} {\bibinfo
  {journal} {Phys. Rev. A}\ }\textbf {\bibinfo {volume} {99}},\ \bibinfo
  {pages} {042318} (\bibinfo {year} {2019})}\BibitemShut {NoStop}%
\bibitem [{\citenamefont {Flammia}\ and\ \citenamefont
  {Wallman}(2020)}]{Flammia20}%
  \BibitemOpen
  \bibfield  {author} {\bibinfo {author} {\bibfnamefont {S.~T.}\ \bibnamefont
  {Flammia}}\ and\ \bibinfo {author} {\bibfnamefont {J.~J.}\ \bibnamefont
  {Wallman}},\ }\bibfield  {title} {\bibinfo {title} {Efficient estimation of
  pauli channels},\ }\bibfield  {journal} {\bibinfo  {journal} {ACM
  Transactions on Quantum Computing}\ }\textbf {\bibinfo {volume} {1}},\ \href
  {https://doi.org/10.1145/3408039} {10.1145/3408039} (\bibinfo {year}
  {2020})\BibitemShut {NoStop}%
\bibitem [{\citenamefont {Harper}\ \emph {et~al.}(2020)\citenamefont {Harper},
  \citenamefont {Flammia},\ and\ \citenamefont {Wallman}}]{Harper20}%
  \BibitemOpen
  \bibfield  {author} {\bibinfo {author} {\bibfnamefont {R.}~\bibnamefont
  {Harper}}, \bibinfo {author} {\bibfnamefont {S.~T.}\ \bibnamefont
  {Flammia}},\ and\ \bibinfo {author} {\bibfnamefont {J.~J.}\ \bibnamefont
  {Wallman}},\ }\bibfield  {title} {\bibinfo {title} {Efficient learning of
  quantum noise},\ }\href {https://doi.org/10.1038/s41567-020-0992-8}
  {\bibfield  {journal} {\bibinfo  {journal} {Nature Physics}\ }\textbf
  {\bibinfo {volume} {16}},\ \bibinfo {pages} {1184} (\bibinfo {year}
  {2020})}\BibitemShut {NoStop}%
\bibitem [{\citenamefont {Wagner}\ \emph {et~al.}(2021)\citenamefont {Wagner},
  \citenamefont {Kampermann}, \citenamefont {Bruß},\ and\ \citenamefont
  {Kliesch}}]{wagner2021pauli}%
  \BibitemOpen
  \bibfield  {author} {\bibinfo {author} {\bibfnamefont {T.}~\bibnamefont
  {Wagner}}, \bibinfo {author} {\bibfnamefont {H.}~\bibnamefont {Kampermann}},
  \bibinfo {author} {\bibfnamefont {D.}~\bibnamefont {Bruß}},\ and\ \bibinfo
  {author} {\bibfnamefont {M.}~\bibnamefont {Kliesch}},\ }\href@noop {}
  {\bibinfo {title} {Pauli channels can be estimated from syndrome measurements
  in quantum error correction}} (\bibinfo {year} {2021}),\ \Eprint
  {https://arxiv.org/abs/2107.14252} {arXiv:2107.14252 [quant-ph]} \BibitemShut
  {NoStop}%
\bibitem [{\citenamefont {Bomb\'{\i}n}(2015)}]{Bombin2015}%
  \BibitemOpen
  \bibfield  {author} {\bibinfo {author} {\bibfnamefont {H.}~\bibnamefont
  {Bomb\'{\i}n}},\ }\bibfield  {title} {\bibinfo {title} {Single-shot
  fault-tolerant quantum error correction},\ }\href
  {https://doi.org/10.1103/PhysRevX.5.031043} {\bibfield  {journal} {\bibinfo
  {journal} {Phys. Rev. X}\ }\textbf {\bibinfo {volume} {5}},\ \bibinfo {pages}
  {031043} (\bibinfo {year} {2015})}\BibitemShut {NoStop}%
\bibitem [{\citenamefont {Kubica}\ and\ \citenamefont
  {Vasmer}(2021)}]{kubica2021singleshot}%
  \BibitemOpen
  \bibfield  {author} {\bibinfo {author} {\bibfnamefont {A.}~\bibnamefont
  {Kubica}}\ and\ \bibinfo {author} {\bibfnamefont {M.}~\bibnamefont
  {Vasmer}},\ }\href@noop {} {\bibinfo {title} {Single-shot quantum error
  correction with the three-dimensional subsystem toric code}} (\bibinfo {year}
  {2021}),\ \Eprint {https://arxiv.org/abs/2106.02621} {arXiv:2106.02621
  [quant-ph]} \BibitemShut {NoStop}%
\bibitem [{\citenamefont {Gidney}\ and\ \citenamefont
  {Fowler}(2019)}]{Gidney2019efficientmagicstate}%
  \BibitemOpen
  \bibfield  {author} {\bibinfo {author} {\bibfnamefont {C.}~\bibnamefont
  {Gidney}}\ and\ \bibinfo {author} {\bibfnamefont {A.~G.}\ \bibnamefont
  {Fowler}},\ }\bibfield  {title} {\bibinfo {title} {Efficient magic state
  factories with a catalyzed {$|CCZ\rangle$} to {$2|T\rangle$}
  transformation},\ }\href {https://doi.org/10.22331/q-2019-04-30-135}
  {\bibfield  {journal} {\bibinfo  {journal} {{Quantum}}\ }\textbf {\bibinfo
  {volume} {3}},\ \bibinfo {pages} {135} (\bibinfo {year} {2019})}\BibitemShut
  {NoStop}%
\bibitem [{\citenamefont {Siegel}\ \emph {et~al.}(2022)\citenamefont {Siegel},
  \citenamefont {Strikis}, \citenamefont {Flatters},\ and\ \citenamefont
  {Benjamin}}]{siegel2022adaptive}%
  \BibitemOpen
  \bibfield  {author} {\bibinfo {author} {\bibfnamefont {A.}~\bibnamefont
  {Siegel}}, \bibinfo {author} {\bibfnamefont {A.}~\bibnamefont {Strikis}},
  \bibinfo {author} {\bibfnamefont {T.}~\bibnamefont {Flatters}},\ and\
  \bibinfo {author} {\bibfnamefont {S.}~\bibnamefont {Benjamin}},\ }\href@noop
  {} {\bibinfo {title} {Adaptive surface code for quantum error correction in
  the presence of temporary or permanent defects}} (\bibinfo {year} {2022}),\
  \Eprint {https://arxiv.org/abs/2211.08468} {arXiv:2211.08468 [quant-ph]}
  \BibitemShut {NoStop}%
\bibitem [{\citenamefont {Lin}\ \emph {et~al.}(2023)\citenamefont {Lin},
  \citenamefont {Viszlai}, \citenamefont {Smith}, \citenamefont {Ravi},
  \citenamefont {Yuan}, \citenamefont {Chong},\ and\ \citenamefont
  {Brown}}]{lin2023empirical}%
  \BibitemOpen
  \bibfield  {author} {\bibinfo {author} {\bibfnamefont {S.~F.}\ \bibnamefont
  {Lin}}, \bibinfo {author} {\bibfnamefont {J.}~\bibnamefont {Viszlai}},
  \bibinfo {author} {\bibfnamefont {K.~N.}\ \bibnamefont {Smith}}, \bibinfo
  {author} {\bibfnamefont {G.~S.}\ \bibnamefont {Ravi}}, \bibinfo {author}
  {\bibfnamefont {C.}~\bibnamefont {Yuan}}, \bibinfo {author} {\bibfnamefont
  {F.~T.}\ \bibnamefont {Chong}},\ and\ \bibinfo {author} {\bibfnamefont
  {B.~J.}\ \bibnamefont {Brown}},\ }\href@noop {} {\bibinfo {title} {Empirical
  overhead of the adapted surface code on defective qubit arrays}} (\bibinfo
  {year} {2023}),\ \Eprint {https://arxiv.org/abs/2305.00138} {arXiv:2305.00138
  [quant-ph]} \BibitemShut {NoStop}%
\bibitem [{\citenamefont {Bravyi}\ \emph {et~al.}(2013)\citenamefont {Bravyi},
  \citenamefont {Duclos-Cianci}, \citenamefont {Poulin},\ and\ \citenamefont
  {Suchara}}]{bravyi2013subsystem}%
  \BibitemOpen
  \bibfield  {author} {\bibinfo {author} {\bibfnamefont {S.}~\bibnamefont
  {Bravyi}}, \bibinfo {author} {\bibfnamefont {G.}~\bibnamefont
  {Duclos-Cianci}}, \bibinfo {author} {\bibfnamefont {D.}~\bibnamefont
  {Poulin}},\ and\ \bibinfo {author} {\bibfnamefont {M.}~\bibnamefont
  {Suchara}},\ }\href@noop {} {\bibinfo {title} {Subsystem surface codes with
  three-qubit check operators}} (\bibinfo {year} {2013}),\ \Eprint
  {https://arxiv.org/abs/1207.1443} {arXiv:1207.1443 [quant-ph]} \BibitemShut
  {NoStop}%
\bibitem [{\citenamefont {Chamberland}\ \emph {et~al.}(2020)\citenamefont
  {Chamberland}, \citenamefont {Zhu}, \citenamefont {Yoder}, \citenamefont
  {Hertzberg},\ and\ \citenamefont {Cross}}]{Chamberland2020}%
  \BibitemOpen
  \bibfield  {author} {\bibinfo {author} {\bibfnamefont {C.}~\bibnamefont
  {Chamberland}}, \bibinfo {author} {\bibfnamefont {G.}~\bibnamefont {Zhu}},
  \bibinfo {author} {\bibfnamefont {T.~J.}\ \bibnamefont {Yoder}}, \bibinfo
  {author} {\bibfnamefont {J.~B.}\ \bibnamefont {Hertzberg}},\ and\ \bibinfo
  {author} {\bibfnamefont {A.~W.}\ \bibnamefont {Cross}},\ }\bibfield  {title}
  {\bibinfo {title} {Topological and subsystem codes on low-degree graphs with
  flag qubits},\ }\href {https://doi.org/10.1103/PhysRevX.10.011022} {\bibfield
   {journal} {\bibinfo  {journal} {Phys. Rev. X}\ }\textbf {\bibinfo {volume}
  {10}},\ \bibinfo {pages} {011022} (\bibinfo {year} {2020})}\BibitemShut
  {NoStop}%
\bibitem [{\citenamefont {Bonilla~Ataides}\ \emph {et~al.}(2021)\citenamefont
  {Bonilla~Ataides}, \citenamefont {Tuckett}, \citenamefont {Bartlett},
  \citenamefont {Flammia},\ and\ \citenamefont {Brown}}]{BonillaAtaides2021}%
  \BibitemOpen
  \bibfield  {author} {\bibinfo {author} {\bibfnamefont {J.~P.}\ \bibnamefont
  {Bonilla~Ataides}}, \bibinfo {author} {\bibfnamefont {D.~K.}\ \bibnamefont
  {Tuckett}}, \bibinfo {author} {\bibfnamefont {S.~D.}\ \bibnamefont
  {Bartlett}}, \bibinfo {author} {\bibfnamefont {S.~T.}\ \bibnamefont
  {Flammia}},\ and\ \bibinfo {author} {\bibfnamefont {B.~J.}\ \bibnamefont
  {Brown}},\ }\bibfield  {title} {\bibinfo {title} {The xzzx surface code},\
  }\href {https://doi.org/10.1038/s41467-021-22274-1} {\bibfield  {journal}
  {\bibinfo  {journal} {Nature Communications}\ }\textbf {\bibinfo {volume}
  {12}},\ \bibinfo {pages} {2172} (\bibinfo {year} {2021})}\BibitemShut
  {NoStop}%
\bibitem [{\citenamefont {Levin}\ and\ \citenamefont {Wen}(2005)}]{Levin05}%
  \BibitemOpen
  \bibfield  {author} {\bibinfo {author} {\bibfnamefont {M.~A.}\ \bibnamefont
  {Levin}}\ and\ \bibinfo {author} {\bibfnamefont {X.-G.}\ \bibnamefont
  {Wen}},\ }\bibfield  {title} {\bibinfo {title} {String-net condensation: A
  physical mechanism for topological phases},\ }\href
  {https://doi.org/10.1103/PhysRevB.71.045110} {\bibfield  {journal} {\bibinfo
  {journal} {Phys. Rev. B}\ }\textbf {\bibinfo {volume} {71}},\ \bibinfo
  {pages} {045110} (\bibinfo {year} {2005})}\BibitemShut {NoStop}%
\bibitem [{\citenamefont {Bombin}\ and\ \citenamefont
  {Martin-Delgado}(2006)}]{Bombin06}%
  \BibitemOpen
  \bibfield  {author} {\bibinfo {author} {\bibfnamefont {H.}~\bibnamefont
  {Bombin}}\ and\ \bibinfo {author} {\bibfnamefont {M.~A.}\ \bibnamefont
  {Martin-Delgado}},\ }\bibfield  {title} {\bibinfo {title} {Topological
  quantum distillation},\ }\href
  {https://doi.org/10.1103/PhysRevLett.97.180501} {\bibfield  {journal}
  {\bibinfo  {journal} {Phys. Rev. Lett.}\ }\textbf {\bibinfo {volume} {97}},\
  \bibinfo {pages} {180501} (\bibinfo {year} {2006})}\BibitemShut {NoStop}%
\bibitem [{\citenamefont {Hastings}\ and\ \citenamefont
  {Haah}(2021)}]{Hastings2021dynamically}%
  \BibitemOpen
  \bibfield  {author} {\bibinfo {author} {\bibfnamefont {M.~B.}\ \bibnamefont
  {Hastings}}\ and\ \bibinfo {author} {\bibfnamefont {J.}~\bibnamefont
  {Haah}},\ }\bibfield  {title} {\bibinfo {title} {Dynamically {G}enerated
  {L}ogical {Q}ubits},\ }\href {https://doi.org/10.22331/q-2021-10-19-564}
  {\bibfield  {journal} {\bibinfo  {journal} {{Quantum}}\ }\textbf {\bibinfo
  {volume} {5}},\ \bibinfo {pages} {564} (\bibinfo {year} {2021})}\BibitemShut
  {NoStop}%
\bibitem [{\citenamefont {Beigi}\ \emph {et~al.}(2011)\citenamefont {Beigi},
  \citenamefont {Shor},\ and\ \citenamefont {Whalen}}]{Beigi11}%
  \BibitemOpen
  \bibfield  {author} {\bibinfo {author} {\bibfnamefont {S.}~\bibnamefont
  {Beigi}}, \bibinfo {author} {\bibfnamefont {P.~W.}\ \bibnamefont {Shor}},\
  and\ \bibinfo {author} {\bibfnamefont {D.}~\bibnamefont {Whalen}},\
  }\bibfield  {title} {\bibinfo {title} {The quantum double model with
  boundary: Condensations and symmetries},\ }\href
  {https://doi.org/10.1007/s00220-011-1294-x} {\bibfield  {journal} {\bibinfo
  {journal} {Communications in Mathematical Physics}\ }\textbf {\bibinfo
  {volume} {306}},\ \bibinfo {pages} {663} (\bibinfo {year}
  {2011})}\BibitemShut {NoStop}%
\bibitem [{\citenamefont {Kitaev}\ and\ \citenamefont {Kong}(2012)}]{Kitaev12}%
  \BibitemOpen
  \bibfield  {author} {\bibinfo {author} {\bibfnamefont {A.}~\bibnamefont
  {Kitaev}}\ and\ \bibinfo {author} {\bibfnamefont {L.}~\bibnamefont {Kong}},\
  }\bibfield  {title} {\bibinfo {title} {Models for gapped boundaries and
  domain walls},\ }\href {https://doi.org/10.1007/s00220-012-1500-5} {\bibfield
   {journal} {\bibinfo  {journal} {Communications in Mathematical Physics}\
  }\textbf {\bibinfo {volume} {313}},\ \bibinfo {pages} {351} (\bibinfo {year}
  {2012})}\BibitemShut {NoStop}%
\bibitem [{\citenamefont {Levin}(2013)}]{Levin13}%
  \BibitemOpen
  \bibfield  {author} {\bibinfo {author} {\bibfnamefont {M.}~\bibnamefont
  {Levin}},\ }\bibfield  {title} {\bibinfo {title} {Protected edge modes
  without symmetry},\ }\href {https://doi.org/10.1103/PhysRevX.3.021009}
  {\bibfield  {journal} {\bibinfo  {journal} {Phys. Rev. X}\ }\textbf {\bibinfo
  {volume} {3}},\ \bibinfo {pages} {021009} (\bibinfo {year}
  {2013})}\BibitemShut {NoStop}%
\bibitem [{\citenamefont {Barkeshli}\ \emph {et~al.}(2013)\citenamefont
  {Barkeshli}, \citenamefont {Jian},\ and\ \citenamefont {Qi}}]{Barkeshli13}%
  \BibitemOpen
  \bibfield  {author} {\bibinfo {author} {\bibfnamefont {M.}~\bibnamefont
  {Barkeshli}}, \bibinfo {author} {\bibfnamefont {C.-M.}\ \bibnamefont
  {Jian}},\ and\ \bibinfo {author} {\bibfnamefont {X.-L.}\ \bibnamefont {Qi}},\
  }\bibfield  {title} {\bibinfo {title} {Theory of defects in abelian
  topological states},\ }\href {https://doi.org/10.1103/PhysRevB.88.235103}
  {\bibfield  {journal} {\bibinfo  {journal} {Phys. Rev. B}\ }\textbf {\bibinfo
  {volume} {88}},\ \bibinfo {pages} {235103} (\bibinfo {year}
  {2013})}\BibitemShut {NoStop}%
\bibitem [{\citenamefont {Kesselring}\ \emph {et~al.}(2018)\citenamefont
  {Kesselring}, \citenamefont {Pastawski}, \citenamefont {Eisert},\ and\
  \citenamefont {Brown}}]{Kesselring18}%
  \BibitemOpen
  \bibfield  {author} {\bibinfo {author} {\bibfnamefont {M.~S.}\ \bibnamefont
  {Kesselring}}, \bibinfo {author} {\bibfnamefont {F.}~\bibnamefont
  {Pastawski}}, \bibinfo {author} {\bibfnamefont {J.}~\bibnamefont {Eisert}},\
  and\ \bibinfo {author} {\bibfnamefont {B.~J.}\ \bibnamefont {Brown}},\
  }\bibfield  {title} {\bibinfo {title} {The boundaries and twist defects of
  the color code and their applications to topological quantum computation},\
  }\href {https://doi.org/10.22331/q-2018-10-19-101} {\bibfield  {journal}
  {\bibinfo  {journal} {{Quantum}}\ }\textbf {\bibinfo {volume} {2}},\ \bibinfo
  {pages} {101} (\bibinfo {year} {2018})}\BibitemShut {NoStop}%
\bibitem [{\citenamefont {Vuillot}(2021)}]{vuillot2021planar}%
  \BibitemOpen
  \bibfield  {author} {\bibinfo {author} {\bibfnamefont {C.}~\bibnamefont
  {Vuillot}},\ }\href@noop {} {\bibinfo {title} {Planar floquet codes}}
  (\bibinfo {year} {2021}),\ \Eprint {https://arxiv.org/abs/2110.05348}
  {arXiv:2110.05348 [quant-ph]} \BibitemShut {NoStop}%
\end{thebibliography}%

\end{document}